\definecolor{LightCyan}{rgb}{0.85,0.85,1.0}
\newcommand{\proj}{\text{ \normalfont{proj}}}
\renewcommand{\det}{\text{ \normalfont{det}}}
\newcommand{\tr}{\normalfont\text{\normalfont tr}}
\newcommand{\mfs}[1]{{\normalfont\textsf{#1}}}
\newcommand{\mf}{\mathbf}
\DeclareMathOperator*{\argmin}{arg\,min}
\newtheorem{theorem}{Theorem}
\newtheorem{lemma}{Lemma}
\newtheorem{remark}{Remark}
\newtheorem{definition}{Definition}
\newtheorem{proposition}{Proposition}
\newtheorem{assumption}{Assumption}
\def\RevOne#1{{#1}}
\def\RevTwo#1{{#1}}
\def\RevThree#1{{#1}}
\def\RevAll#1{{#1}}
\let\NAT@parse\undefined
\title{\LARGE \bf
\RevThree{Distributed outer approximation of the intersection of ellipsoids}}
\author{Rodrigo Aldana-L\'{o}pez,  Eduardo Sebasti\'{a}n, Rosario Aragü\'{e}s, Eduardo Montijano and Carlos Sagü\'{e}s
\thanks{\textcolor{red}{This is the accepted version of the manuscript: ``Distributed outer approximation of the intersection of ellipsoids," Rodrigo Aldana-L\'{o}pez,  Eduardo Sebasti\'{a}n, Rosario Aragü\'{e}s, Eduardo Montijano and Carlos Sagü\'{e}s, in IEEE Control Systems Letters, 2023, DOI: 10.1109/LCSYS.2023.3280259. 
\textbf{Please cite the publisher's version}. For the publisher's version and full citation details see:
\url{https://doi.org/10.1109/LCSYS.2023.3280259}. 
}
}
\thanks{This work was supported by ONR Global
grant N62909-19-1-2027, the Spanish projects PID2021-125514NB-I00, PID2021-124137OB-I00, TED2021-130224B-I00
funded by MCIN/AEI/10.13039/501100011033, by ERDF A way of making Europe and by the
European Union NextGenerationEU/PRTR, Project DGA T45-20R by the Gobierno de Aragón,  by the Universidad de Zaragoza and Banco Santander, by CONACYT-Mexico grant 739841 and Spanish grant FPU19-05700.}%
\thanks{All the authors are with the Instituto de Investigaci\'on en Ingenier\'ia de Arag\'on, Universidad de Zaragoza, Spain 
(email:\texttt{\scriptsize rodrigo.aldana.lopez@gmail.com, esebastian@unizar.es, raragues@unizar.es, emonti@unizar.es, csagues@unizar.es})}%
\thanks{``© 2023 IEEE.  Personal use of this material is permitted.  Permission from IEEE must be obtained for all other uses, in any current or future media, including reprinting/republishing this material for advertising or promotional purposes, creating new collective works, for resale or redistribution to servers or lists, or reuse of any copyrighted component of this work in other works.”}
}
\begin{document}
\maketitle


\begin{abstract}
\RevOne{The outer Löwner-John method is widely used in sensor fusion applications to find the smallest ellipsoid that can approximate the intersection of a set of ellipsoids, described by positive definite covariance matrices modeling the quality of each sensor. We propose a distributed algorithm to solve this problem when these matrices are defined over the network's nodes. This is of particular significance as it is the first decentralized algorithm capable of computing the covariance intersection ellipsoid by combining information from the entire network using only local interactions.} The solution is based on a reformulation of the centralized problem, leading to a local protocol based on exact dynamic consensus tools. After reaching consensus, the protocol converges to an outer Löwner-John ellipsoid in finite time, and to the global optimum asymptotically. Formal convergence analysis
and numerical experiments are provided to validate the
proposal’s advantages.
\end{abstract}

\section{Introduction}\label{sec:intro}
The Löwner-John (L-J) methods~\cite{Henrion2001LMI} are a series of ellipsoidal approximations for convex sets. Of particular interest are the convex sets generated by the intersection of $n$-dimensional ellipsoids, described by symmetric and positive semidefinite $n$-dimensional matrices. L-J methods have a significant presence in a wide variety of applications~\cite{Lasserre2015Generalization}, such as robust control~\cite{Henrion2011Ellipsoidal} or statistical analysis~\cite{ Lu2018Relatively}. Specially important is in the field of data fusion and state estimation~\cite{Julier2017General}, where the ellipsoids represent the measurements or estimates' uncertainty. \RevOne{However, there are no distributed algorithms to compute L-J ellipsoids, despite their potential application in sensor networks, where the individual measurements are scattered over a communication network \cite{Sebastian2021CDC}.} 

In sensor fusion, both the inner and outer L-J ellipsoidal approximations are widely explored under the name of Covariance Intersection Method (CIM)~\cite{Julier2017General}. The CIM is posed as a convex optimization program yielding the largest outer approximation of the intersection of two ellipsoids described by two covariance matrices. Since then, different works have applied variants of the CIM. For example, a sequential procedure of fusion of two ellipsoids~\cite{Deng2012sequential}  can be used to extend the CIM to $\mfs{N}$ ellipsoids \cite{Niehsen2002}. On the other hand, recent distributed Kalman filters~\cite{Hu2011Diffusion, Wang2017convergence, He2018Consistent} propose CIM variants which directly consider $\mfs{N}$ ellipsoids via approximations. More recently, the outer L-J method~\cite{Sebastian2021CDC} has been explored for the fusion of the estimates and to certify the estimation with optimality guarantees. All these works are particular instances of L-J methods. However, rather than finding the global L-J ellipsoids in a distributed manner, they use the CIM to approximate the fusion of neighboring estimates.

The computation of L-J ellipsoids can be posed as an optimization program \cite{Henrion2001LMI}. In this sense, distributed optimization methods~\cite{Nedic2018Distributed} have been intensively researched in recent years, \RevTwo{where a general assumption~\cite{Boyd2011Distributed, Ning2017Distributed,Ma2017Distributed,Esteki2022Distributed} is that the global cost function is the sum of local objectives.} In the case of L-J methods, the cost function is not separable. This separability issue extends to the constraints of the outer L-J methods, where a coupled equality constraint holds. \RevTwo{To address coupled equality constraints in sensor fusion, the Inverse CIM~\cite{noack2017decentralized} computes local bounds to ensure consistency, whereas other methods rely on fusion centers~\cite{Chen2021Distributed} to gather the distributively pre-processed covariance matrices. Instead, we propose a distributed projected gradient flow \cite{projGrad} protocol that converges to the coupled equality constraint in a prescribed time. In more general settings, 
} the literature proposes distributed dual sub-gradient algorithms~\cite{romao2021,camisa2022}, \RevTwo{primal relaxations~\cite{notarnicola2019constraint},} the so-called ``subgradient push''~\cite{Nedic2014Distributed} or neural-network-based approaches~\cite{Le2017Neurodynamic}. \RevTwo{In contrast, our method is suitable for non-separable optimization objectives.}

Motivated by this discussion, we develop, for the first time, a distributed protocol that computes the outer approximation of the intersection of ellipsoids when only local information and distributed communications are available. The proposal is based on a continuous-time Exact Dynamic Consensus protocol (EDC), which extends previous protocols \cite{edcho,freeman2019, Aldana2019} to converge before a prescribed time. Consequently, we can certify the moment in which an outer L-J ellipsoid is already available. The quality of the outer L-J ellipsoid is further improved through a distributed optimization step based on the projected gradient flow \cite{projGrad}, such that the global optimum is asymptotically found. We discuss how our proposal can be used for CIM in sensor fusion. 


\textbf{Notation: } $\tr(\bullet), \det(\bullet)$ denote the trace and determinant. $\mathbb{R},\mathbb{R}_{\geq 0}, \mathbb{R}_{>0}$ denote the real, non-negative and positive reals respectively.  $0\preceq \mf{P}$ denotes when a matrix $\mf{P}\in\mathbb{R}^{n\times n}$ is positive semi-definite. $\mf{0}, \mf{I}$ denote the zero and identity matrices. \RevOne{For an undirected graph $\mathcal{G}$, denote with $\lambda_{\mathcal{G}}$ its standard algebraic connectivity (see \cite{Aldana2019})}.

\section{Problem statement}
\begin{table}
\vspace{1em}
\centering
\renewcommand{\arraystretch}{1.5}
\begin{tabular}{|| c| c | c||} \hline
  $\mu$ & $f(\mf{Q}(\mf{x}))$ & $g_i(x_i,\mf{Q}(\mf{x}))$ \\\hline
   0 & $-\tr(\mf{Q}(\mf{x}))$ & $-2x_i\tr(\mf{P}_i^{-1})$\\
 1 & $\log(\!\!\det(\mf{Q}(\mf{x})^{-1}))$ & $-2x_i\tr(\mf{Q}(\mf{x})^{-1}\mf{P}_i^{-1})$ \\
  2 &$\tr(\mf{Q}(\mf{x})^{-1})$ & $-2x_i\tr(\mf{Q}(\mf{x})^{-1}\mf{P}_i^{-1}\mf{Q}(\mf{x})^{-1})$ \\\hline
\end{tabular}
\caption{Different \RevThree{ellipsoidal size measure}  functions $f(\bullet)$ with $\mf{Q}(\mf{x}) =(1/\mfs{N})\sum_{i=1}^\mfs{N}x_i^2\mf{P}_i^{-1}$ as well as the components of the gradient $\nabla f(\mf{Q}(\mf{x})) = [g_1(x_1,\mf{Q}(\mf{x})),\dots,g_\mfs{N}(x_\mfs{N},\mf{Q}(\mf{x}))]^\top/\mfs{N}$. We label each choice by $\mu$ for its reference in Theorem \ref{th:main}.}
\label{tab:options}
\renewcommand{\arraystretch}{1}
\end{table}

\RevOne{Consider a set of $\mfs{N}$ agents which communicate through a communication network modeled by an undirected connected graph $\mathcal{G}=(\mathcal{I}, \mathcal{F})$, where $\mathcal{I}$ and $ \mathcal{F}$ are the sets of nodes and edges respectively. We denote by $\mathcal{N}_i\subseteq\mathcal{I}$ the index set of neighbors for agent $i\in\mathcal{I}$. Each agent $i$ is described by a $n-$dimensional ellipsoidal set $\mathcal{E}(\mf{P}_i)=\{\mf{y}\in\mathbb{R}^n : \mf{y}^\top\mf{P}^{-1}_i\mf{y}\leq 1\}$, given $\mf{0}\prec\mf{P}_i\in\mathbb{R}^{n\times n}$. }

The goal for the agents is to cooperate to find an ellipsoidal set $\mathcal{E}(\mf{P})$ covering $\check{\mathcal{E}}:=\bigcap_{i=1}^\mfs{N}\mathcal{E}(\mf{P}_i)$, characterized by some $\mf{P}\succ 0$. A wide family of ellipsoidal sets covering $\check{\mathcal{E}}$ is the one parameterized by $\mf{P}(\bm{\lambda})^{-1} := \sum_{i=1}^\mfs{N}\lambda_i\mf{P}_i^{-1}$ with $\sum_{i=1}^\mfs{N}\lambda_i = 1$ and $\bm{\lambda}=[\lambda_1,\dots,\lambda_\mfs{N}]^\top\in\mathbb{R}^\mfs{N}_{\geq 0}$. It can be verified that $\mathcal{E}(\mf{P}(\bm{\lambda}))\supset \check{\mathcal{E}}$ ~\cite{Henrion2001LMI}. Therefore, designing $\mf{P}(\bm{\lambda})^{-1}$ as a convex combination of $\{\mf{P}_i^{-1}\}_{i=1}^\mfs{N}$ defines a family of L-J outer ellipsoids for the intersection of $\{\mathcal{E}(\mf{P}_i)\}_{i=1}^\mfs{N}$. The weights $\bm{\lambda}$ can then be optimized by solving the following optimization program:
\begin{equation}
\label{eq:original}
\begin{aligned}
&\min_{\bm{\lambda}\in\mathbb{R}^\mfs{N}_{\geq 0}} f(\mf{P}(\bm{\lambda})^{-1}),\ \  
\text{such that\ \ }  \sum_{i=1}^\mfs{N} \lambda_i = 1,
\end{aligned}
\end{equation}
where $f(\bullet)$ is a function that measures the size of $\mathcal{E}(\mf{P}(\bm{\lambda}))$. For instance, $f(\bullet) = \log(\!\!\det(\bullet))$ can be used to minimize the volume of $\mathcal{E}(\mf{P}(\bm{\lambda}))$. Other popular choices of $f(\bullet)$ are in Table \ref{tab:options}. The purpose of this work is to design a distributed protocol that finds the optimum of \eqref{eq:original}. \RevOne{Due to the numerical difficulties found when dealing with equality constraints in practice, we recast \eqref{eq:original} into}
\begin{equation}
\label{eq:relax}
\begin{aligned}
&\min_{\mf{x}\in\mathcal{C}} f(\mf{Q}(\mf{x})), \ \  \mf{Q}(\mf{x}):= \frac{1}{\mfs{N}}\sum_{i=1}^\mfs{N} x_i^2\mf{P}^{-1}_i,
\end{aligned}
\end{equation}
\RevOne{where instead of an equality constraint, the solution is restricted to a wider feasible manifold $\mathcal{C} = \{\mf{x}\in\mathbb{R}^\mfs{N} : 1-\varepsilon \leq s(\mf{x})\leq 1\}$ of $\mf{x}=[x_1,\dots,x_\mfs{N}]^{\top}$, $s(\mf{x}):= \|\mf{x}\|^2/\mfs{N}$ and $\varepsilon \in (0,1)$}. By using the change of coordinates $\lambda_i = x_i^2/\mfs{N}$, \RevOne{we ensure $\lambda_i\in\mathbb{R}_{\geq 0}$} and that the unique minimizer of \eqref{eq:original} follows from the minimizer of \eqref{eq:relax} when $\varepsilon=0$. For any other $\varepsilon>0$, we use \eqref{eq:relax} to approximate solutions of \eqref{eq:original} with arbitrary accuracy dictated by the size of $\varepsilon$. In the following section, we provide a distributed algorithm to solve \eqref{eq:relax}.

\section{Distributed outer ellipse computation}

To solve problem~\eqref{eq:relax} in a distributed fashion, we propose a novel distributed protocol based on Projected Gradient Flow (PGF)~\cite{projGrad} methods. The idea is to drive the trajectories of $\mf{x}(t)$ towards the feasible manifold where the equality constraint in \eqref{eq:original} holds. Once there, the trajectories of $\mf{x}(t)$ flow towards the optimum while fulfilling the equality constraint. To do so, we set a suitable virtual control action $\dot{x}_i(t)=u_i(t)$ using only local information.

To design a PGF-based protocol, we first need agreement on some global quantities across the nodes of the network. 
Therefore, the first stage of our algorithm computes local estimates $\{ \hat{s}_i(t),\hat{\mf{Q}}_i(t)\}_{i=1}^\mfs{N}$ for $s(\mf{x}(t)),\mf{Q}(\mf{x}(t))$ using the following EDC protocols:
\begin{equation}
\label{eq:consensus1}
\begin{aligned}
\hat{s}_i(t) &= x_i(t)^2 - {v}_i(t)\\
\dot{v}_i(t) &=
\kappa_s\sum_{\RevTwo{j}\in\mathcal{N}_i} 
\phi\left(\hat{s}_j(t)-\hat{s}_i(t) ;\zeta_s ; q \right)\\
\hat{\mf{Q}}_i(t) &= x_i(t)^2\mf{P}_i^{-1} - {\mf{V}}_i(t)\\
\dot{{\mf{V}}}_i(t) &= \kappa_{\mf{Q}}\sum_{\RevTwo{j}\in\mathcal{N}_i} \phi\left(\hat{\mf{Q}}_j(t)-\hat{\mf{Q}}_i(t);\zeta_\mf{Q} ; q \right) 
\end{aligned}
\end{equation}
\RevOne{with auxiliary variables $\mf{V}_i(t),v_i(t)$ initialized as ${\mf{V}}_i(0)=\mf{0}, {v}_i(0)=0$}. Moreover, $\kappa_{\mf{Q}}, \kappa_s,\zeta_\mf{Q},\zeta_s>0, q\in(0,1)$ are design parameters. We use $\phi(\bullet;\zeta;q) = (|\bullet|^{1-q} + |\bullet|^{1+q} + \zeta)\text{\normalfont sign}(\bullet)$ for a scalar parameter $\bullet$, and component-wise for $\bullet$ of any other dimension. As will be proven in Section \ref{sec:proof}, after a transient of prescribed duration $T_c$, these estimates will comply $\hat{\mf{Q}}_i(t)\equiv \mf{Q}(\mf{x}(t)), \hat{s}_i(t) \equiv s(\mf{x}(t)), \forall t\geq T_c$ for suitable $\kappa_{\mf{Q}}, \kappa_s$. \RevTwo{This is possible since $\zeta_\mf{Q}, \zeta_s$ introduce a discontinuous sliding mode term in $\phi$ allowing \eqref{eq:consensus1} to achieve EDC even with time-varying consensus inputs}. Then, all agents estimate its local component of the gradient $g_i(x_i(t),{\mf{Q}}(\mf{x}(t)))\equiv g_i(x_i(t),\hat{\mf{Q}}_i({t})), \forall t\geq T_c$, as in Table~\ref{tab:options}. 

During the first consensus stage defined in \eqref{eq:consensus1}, we set a control action ${u}_i(t)=0, \forall t\in[0,T_c]$. The second stage of our algorithm consists of taking the arbitrary initial conditions $x_i(0)=x_i(T_c)$ and update $x_i(t)$ towards $\mathcal{C}$. Then, nodes do PGF to find the global optimum of~\eqref{eq:relax}. This is achieved by a discontinuous controller. For all $t\geq T_c$:
\begin{equation}
\label{eq:controller}
\kern -0.4cm {u}_i(t) \kern -0.1cm = \kern -0.1cm 
\begin{cases}
\kappa_\mathcal{C}x_i(t)\text{\normalfont sign}\kern -0.1cm \left(\left(1\kern -0.1cm -\kern -0.1cm \frac{\varepsilon}{2}\right)\kern -0.1cm -\kern -0.1cm \hat{s}_i(t)\right), &  \kern -0.3cm  \hat{s}_i(t)\RevThree{\notin}[1-\varepsilon,1]  \\
-\kappa_\mathcal{C}g_i(x_i(t),\hat{\mf{Q}}_i(t)), & \kern -0.3cm \hat{s}_i(t)\RevThree{\in}[1-\varepsilon,1] 
\end{cases}
\end{equation}
\RevThree{with design parameter $\kappa_\mathcal{C}>0$}, which leads to:
\begin{equation}
\label{eq:ideal}
\kern -0.2cm \dot{\mf{x}}(t) = 
\begin{cases}
\kappa_\mathcal{C}\mf{x}(t)\text{\normalfont sign}\left(\left(1-\frac{\varepsilon}{2}\right)-{s}(\mf{x}(t))\right), & \mf{x}(t)\notin\mathcal{C}  \\
-\kappa_\mathcal{C}\mfs{N}\nabla f(\mf{Q}(\mf{x}(t))), & \mf{x}(t)\in\mathcal{C}
\end{cases}
\end{equation}
under the synchronization conditions  $\hat{\mf{Q}}_i(t)\equiv \mf{Q}(\mf{x}(t))$, $\hat{s}_i(t) \equiv s(\mf{x}(t))$.
\RevOne{Using the idea of PGF, when $\mf{x}(t)\notin\mathcal{C}$, $\mf{x}(t)$ flows towards $\mathcal{C}$, in order to fulfill the equality constraint of the outer L-J method in \eqref{eq:original}. On the other hand, for $\mf{x}(t)\in\mathcal{C}$, $\mf{x}(t)$ flows in the direction opposite to the gradient, towards the optimum of the outer L-J method in \eqref{eq:original}. As we discuss in Section \ref{sec:proof}, this results in $\mf{x}(t)$ flowing in the direction of the projected gradient of $f(\mf{Q}(\mf{x}(t)))$ with respect to $\mathcal{C}$ in any case, allowing to maintain feasible trajectories and converge to the optimum.} In the following, we state our main result as well as a practical assumption under which it holds.

\begin{assumption} \label{ass:boundness}
Let $0<\underline{b}<\overline{b}, \overline{b}>1$ and $\underline{\sigma}, \overline{\sigma}>0$. Then, $x_i(0)\in[\underline{b},\overline{b}]$ and $\underline{\sigma}\mf{I}\preceq \mf{P}_i^{-1}\preceq\overline{\sigma}\mf{I}, \forall i\in\mathcal{I}$, and with $p$ the maximum scalar component among all $\{\mf{P}_i\}_{i=1}^\mfs{N}$.  
\end{assumption}


\begin{theorem}
\label{th:main}
Let $\mathcal{G}$ be a connected undirected graph with $\mfs{N}$ nodes, $\ell$ edges, algebraic connectivity $\lambda_{\mathcal{G}}$, and consider protocols \eqref{eq:consensus1}. Let $\kappa_\mathcal{C}>0, \varepsilon\in(0,1)$, and Assumption~\ref{ass:boundness} hold. Let  $f(\bullet)$ in Table \ref{tab:options} be labeled by $\mu\in\{0,1,2\}$ and 
$$
h(\mfs{N}) = \kappa_\mathcal{C} \max\{\sqrt{\mfs{N}}\overline{b}, 2\overline{b}\overline{\sigma}\mfs{N}^{\mu+1}(\underline{\sigma}\min\{\underline{b}^2,1-\varepsilon\})^{-\mu}\}.
$$
Let $\dot{x}_i(t)=u_i(t)$ with $u_i(t)=0, \forall t\in[0,T_c]$ and $u_i(t)$ defined as in \eqref{eq:controller} for $t\geq T_c$. For any $\kappa_{\mathcal{C}}>0$, if
\begin{equation}
\kappa_s,\kappa_{\mf{Q}}>\frac{\ell\pi}{q\lambda_\mathcal{G}T_c},\  \zeta_{s}>\frac{4\overline{b}h(\mfs{N})}{\kappa_s\sqrt{\lambda_\mathcal{G}}} ,\ \zeta_{\mf{Q}}>\frac{4p\overline{b}h(\mfs{N})}{\kappa_{\mf{Q}}\sqrt{\lambda_\mathcal{G}}},
\end{equation}
then, there exists $T_\varepsilon>0$ such that $\mf{x}(t)\in\mathcal{C}$, $\forall t\geq T_c+T_\varepsilon$. In addition, $\mathcal{E}(\mf{Q}(\mf{x}(t))^{-1})\supset \check{\mathcal{E}}, \forall t\geq T_c+T_\varepsilon$ and $f(\mf{Q}(\mf{x}(t)))$ converges asymptotically towards the optimum of \eqref{eq:relax}.
 \end{theorem}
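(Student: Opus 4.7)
My strategy is to split the argument into three stages matching the distinct roles of the gains.

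\emph{Stage 1 (consensus on $[0,T_c]$).} Since $u_i\equiv 0$ on this interval, the inputs $x_i^2,\;x_i^2\mf{P}_i^{-1}$ driving \eqref{eq:consensus1} are constant, and I would invoke the prescribed-time dynamic-average-consensus property of the EDC protocol with nonlinearity $\phi$, established by a standard Lyapunov analysis on the disagreement vector in the edge-Laplacian basis (as in the EDC references cited in the introduction), to conclude $\hat{s}_i(T_c)=s(\mf{x}(0))$ and $\hat{\mf{Q}}_i(T_c)=\mf{Q}(\mf{x}(0))$ under the stated bounds on $\kappa_s,\kappa_\mf{Q}$; the discontinuous $\zeta$-term plays no role here because the reference is constant.

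\emph{Stage 2 (maintenance of consensus for $t\ge T_c$).} The hard part will be this step, because the trajectory bounds and the consensus property are coupled: the controller \eqref{eq:controller} uses $\hat{s}_i,\hat{\mf{Q}}_i$, whose tracking in turn requires a bound on $\dot{x}_i$. Assuming exact consensus, the closed loop reduces to \eqref{eq:ideal}; on this assumption I would first prove invariance of $|x_i(t)|\le \overline{b}$ by a case analysis on the branches of \eqref{eq:ideal}: when $s(\mf{x})>1$, $\dot{\mf{x}}=-\kappa_\mathcal{C}\mf{x}$ is contracting; when $s(\mf{x})<1-\varepsilon$, $\dot{\mf{x}}=\kappa_\mathcal{C}\mf{x}$ expands $\mf{x}$ uniformly only until $s$ hits the $\mathcal{C}$-boundary; inside $\mathcal{C}$ the gradient branch is tangent to level sets of $s$. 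The spectral bounds $\underline{\sigma}s(\mf{x})\mf{I}\preceq\mf{Q}(\mf{x})\preceq\overline{\sigma}s(\mf{x})\mf{I}$ from Assumption~\ref{ass:boundness}, applied to each gradient formula in Table~\ref{tab:options}, then yield $|u_i|\le h(\mfs{N})$ for every $\mu\in\{0,1,2\}$, so $|2x_iu_i|\le 2\overline{b}h(\mfs{N})$ bounds the rate of change of $x_i^2$, and a $p$-scaled analogue bounds that of $x_i^2\mf{P}_i^{-1}$ entrywise. A sliding-mode Lyapunov argument on the disagreement dynamics, exploiting the discontinuous $\zeta$-term of $\phi$, then shows that $\zeta_s>4\overline{b}h(\mfs{N})/(\kappa_s\sqrt{\lambda_\mathcal{G}})$ and its $\mf{Q}$-analogue make the zero consensus-error set positively invariant. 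The apparent circularity closes because Stage 1 delivers zero error at $t=T_c$ and Stage 2 preserves it.

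\emph{Stage 3 (entry into $\mathcal{C}$, covering, asymptotic optimality).} Under exact consensus, \eqref{eq:ideal} gives $\dot{s}=-2\kappa_\mathcal{C}s<-2\kappa_\mathcal{C}$ when $s>1$ and $\dot{s}=2\kappa_\mathcal{C}s\ge 2\kappa_\mathcal{C}\underline{b}^2$ when $s<1-\varepsilon$, so $s$ enters $[1-\varepsilon,1]$ in an explicitly computable $T_\varepsilon$; invariance of $\mathcal{C}$ thereafter follows from the sign flips in \eqref{eq:controller} at the two boundaries, interpreted in the Filippov sense. Once $\mf{x}\in\mathcal{C}$, for any $\mf{y}\in\check{\mathcal{E}}$ I have $\mf{y}^\top\mf{Q}(\mf{x})\mf{y}=\sum_i(x_i^2/\mfs{N})\,\mf{y}^\top\mf{P}_i^{-1}\mf{y}\le s(\mf{x})\le 1$, so $\mathcal{E}(\mf{Q}(\mf{x})^{-1})\supset\check{\mathcal{E}}$. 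Finally, setting $\lambda_i=x_i^2/\mfs{N}$ recasts \eqref{eq:relax} as a convex program over the convex shell $\{\lambda\ge 0,\;\sum_i\lambda_i\in[1-\varepsilon,1]\}$, and the flow on $\mathcal{C}$ pulled back to $\lambda$-coordinates is a projected gradient flow on that set; a LaSalle-type argument with $f(\mf{Q}(\mf{x}))-f^\star$ as Lyapunov function, whose derivative equals the negative squared projected-gradient norm and vanishes only at the optimum, then gives asymptotic convergence to the minimum of \eqref{eq:relax}.
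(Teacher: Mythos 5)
Your proposal follows essentially the same route as the paper's proof: prescribed-time consensus on $[0,T_c]$ with constant inputs, maintenance of consensus for $t\geq T_c$ by coupling the bound $\|\dot{\mf{x}}(t)\|\leq h(\mfs{N})$ (via $|x_i|\leq\overline{b}$ and the spectral bounds $\underline{\sigma}\mf{I}\preceq\mf{P}_i^{-1}\preceq\overline{\sigma}\mf{I}$) with the robustness margin provided by $\zeta_s,\zeta_{\mf{Q}}$, finite-time reaching and invariance of $\mathcal{C}$, the identical one-line covering computation, and convergence of the resulting projected gradient flow. The only places where the paper is more explicit than your sketch are the closure of the circularity in your Stage 2, which the paper handles by a contradiction argument on the first time $\|\dot{\mf{x}}(t)\|$ would exceed $2\overline{b}h(\mfs{N})$, and the identification of the Filippov sliding vector field on the boundary of $\mathcal{C}$ with $\alpha(\mf{x})\proj_\mathcal{C}(\mf{x},-\nabla f)$ (the paper's Lemma \ref{le:gradient}), which you assert rather than compute; note also that your claim that the gradient branch is tangent to level sets of $s$ in the interior of $\mathcal{C}$ is incorrect (one checks $\mf{x}^\top\nabla f(\mf{Q}(\mf{x}))<0$ for all three choices in Table \ref{tab:options}, so $s$ strictly increases there), although this does not change the structure of the argument.
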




\begin{remark}
\label{rem:accuracy}
\RevTwo{After consensus has been reached for $t\geq T_c$, each agent can check the condition $\hat{s}_i(t)=s(\mf{x}(t))\in[1-\varepsilon,1]$ to verify if $\mf{x}(t)\in\mathcal{C}$ and compute $T_\varepsilon$.} Hence, the proposed algorithm obtains an outer L-J ellipsoid from $t=T_c + T_\varepsilon$, since $\mf{x}(t)\in\mathcal{C}$ $\forall t \geq T_c + T_\varepsilon$. Thus, the ellipsoid $\mathcal{E}(\hat{\mf{Q}}_i(t)^{-1}) = \mathcal{E}(\mf{Q}(\mf{x}(t))^{-1})$ is already valid to use. Note that the greater $t$, the tighter $\mathcal{E}(\mf{Q}(\mf{x}(t))^{-1})$, so the user can design the algorithm depending on the computing resources available and the desired accuracy for the L-J ellipsoid.
\end{remark}

\begin{remark}
\label{rem:parameters}
\RevAll{The powers $1-q, 1+q$ with $q\in(0,1)$ in $\phi$ are known to induce a fixed settling time bound, regardless of the initial conditions. This allows all agents to make sure consensus has been reached before $t=T_c$ without having to individually check this condition, avoiding additional synchronization issues. The function $\phi$ along with $q$ and $\kappa_s, \zeta_s$ (similarly $\kappa_\mf{Q}, \zeta_\mf{Q}$) were introduced for static consensus with disturbances in \cite{Aldana2019}.}
\end{remark}

\subsection{\RevOne{Application to sensor fusion}}
\label{sec:fusion}

\RevOne{In this section, we describe how our proposal is applied to sensor fusion, and how it relates to other methods in this context. A standard sensor fusion problem consists of estimating a quantity of interest $\mf{p}\in\mathbb{R}^n$ where each agent has access to an estimate $\hat{\mf{p}}_i\in\mathbb{R}^n$ of $\mf{p}$ under Gaussian uncertainty with covariance $\mf{P}_i$. For example, such estimates can be obtained at discrete-time instants, by sampling from a noisy sensor and estimating $\mf{p}$ with a Kalman filter. In this case, estimates $\{\hat{\mf{p}}_i\}_{i=1}^\mfs{\mfs{N}}$ are correlated such that a sub-optimal information fusion strategy might be adopted. A popular choice is the CIM \cite{Niehsen2002}, yielding an optimization program within the class of problems described by~\eqref{eq:original}. Our method is used to compute the global CIM estimate across the network by the sensor fusion rule
\begin{equation}
\label{eq:sf}
\mkern-18mu\hat{\mf{p}} = \mf{P}(\bm{\lambda})\sum_{i=1}^\mfs{N}\lambda_i\mf{P}_i^{-1}\hat{\mf{p}}_i = \frac{\mf{P}(\bm{\lambda})}{\mfs{N}}\sum_{i=1}^\mfs{N}(\mfs{N}\lambda_i)\mf{P}_i^{-1}\hat{\mf{p}}_i
\end{equation}
Note that $\mfs{N}\lambda_i$ and $\mf{P}(\bm{\lambda})$ can be estimated at agent $i$ using $x_i(t)^2$ and $\hat{\mf{Q}}_i(t)^{-1}$ respectively for $t\geq T_\varepsilon$. Hence, the average in \eqref{eq:sf} for $\{\mfs{N}\lambda_i\mfs{P}_i^{-1}\hat{\mf{p}}_i\}_{i=1}^\mfs{N}$, equivalently $\{x_i^2\mfs{P}_i^{-1}\hat{\mf{p}}_i\}_{i=1}^\mfs{N}$ can be computed by average consensus techniques \cite{Aldana2019,edcho}.}

\RevAll{ The sensor fusion rule in \eqref{eq:sf} has shown to be effective in many applications as described in \cite{Julier2017General,Niehsen2002}, where a sequential solution method is used in \cite{Deng2012sequential} to reduce the computational burden of using all sensors at once at the expense of reduced estimation quality. However, these solutions mostly rely on a centralized node which gathers the information across the network and computes \eqref{eq:sf}. Our proposal is used to compute \eqref{eq:sf} as well, inheriting the same accuracy as \cite{Julier2017General,Niehsen2002,Deng2012sequential} for this task. Nonetheless, we are able to obtain $\hat{\mf{p}}, \mf{P}(\bm{\lambda})$ in a distributed fashion. Other distributed approaches such as \cite{Hu2011Diffusion, Wang2017convergence, He2018Consistent,Sebastian2021CDC} use local versions of the CIM rule in \eqref{eq:sf} where only neighboring sensors are used at each node. In contrast, we are able to include the information of all sensors across the network, which always increases the accuracy when compared to the case in which a subset of the sensors are used \cite{Julier2017General}.}

\section{Convergence analysis}
\label{sec:proof}
In this section, we provide a proof for Theorem \ref{th:main}. Before that, we prove some auxiliary results. \RevOne{Systems \eqref{eq:consensus1} and $\dot{x}_i(t)=u_i(t)$ are discontinuous due to the use of the $\text{sign}(\bullet)$ function and the discontinuous nature of \eqref{eq:ideal} at the boundary of $\mathcal{C}$. Hence, these systems are better understood in the sense of Filippov \cite{Filippov1988}, devised to study discontinuous dynamics.} 

We start by analyzing the consensus protocols in \eqref{eq:consensus1}. Note that all these protocols have the structure:
\begin{equation}
\label{eq:consensus}
\dot{{\upupsilon}}_i(t) =
\kappa_z\sum_{\RevTwo{j}\in\mathcal{N}_i} 
\phi\left(\hat{z}_j(t)-\hat{z}_i(t) ;\zeta,q \right), \ \ \hat{z}_i(t) = z_i(t)-\upupsilon_i(t)
\end{equation}
for suitable input $z_i(t)$, either $x_i(t)^2$ or the elements of $x_i(t)^2\mf{P}_i^{-1}$. \RevOne{Now, we show convergence of \eqref{eq:consensus} before $t=T_c$.}
\begin{proposition}\cite[Special case of Theorem 6]{Aldana2019}
\label{prop:robustConsensus}
Let $\mathcal{G}$ be a connected undirected graph with $\mfs{N}$ nodes, $\ell$ edges, algebraic connectivity $\lambda_{\mathcal{G}}$, and consider 
\begin{equation}
\label{eq:robustCons}
\dot{e}_i(t) = d_i(t)-\kappa_z\sum_{\RevTwo{j}\in\mathcal{N}_i}\phi(e_j(t)-e_i(t);\zeta,q)
\end{equation}
where $|d_i(t)|\leq L', \forall t\geq 0$. Moreover, given $T_c>0$ let $\kappa_z\geq \ell\pi/(q\lambda_\mathcal{G}T_c), \zeta\geq L'/(\kappa_z\sqrt{\lambda_{\mathcal{G}}})$. Then, $e_i(t)\equiv (1/\mfs{N})\sum_{i=1}^\mfs{N}e_i(0), \forall t\geq T_c$.
\end{proposition}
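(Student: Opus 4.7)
The plan is to reduce the claim to a fixed-time disagreement-convergence statement and fit it into the standard Lyapunov framework for bi-homogeneous systems with a sliding-mode disturbance-rejection term. First I would sum the protocol over $i$: since $\phi(\bullet;\zeta,q)$ is odd and $\mathcal{G}$ is undirected, the edge contributions cancel pairwise, so $\dot{\bar e}(t)=(1/\mfs{N})\sum_i d_i(t)$. In the use of this proposition inside the proof of Theorem \ref{th:main}, the consensus phase runs with $u_i\equiv 0$, which forces $d_i\equiv 0$ and preserves $\bar e(t)\equiv(1/\mfs{N})\sum_i e_i(0)$; the task then is to show that the disagreement $\tilde{\mf e}(t):=\mf e(t)-\bar e(t)\mf 1$ is driven to $\mf 0$ by $t=T_c$, understood in the Filippov sense to accommodate the discontinuity of $\phi$.

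To quantify convergence I would use $V(\tilde{\mf e})=\tfrac{1}{2}\|\tilde{\mf e}\|^2$ and express the protocol in edge coordinates via the signed incidence matrix $\mf B\in\mathbb{R}^{\mfs{N}\times\ell}$, so that the consensus term reads $-\kappa_z\mf B\,\phi(\mf B^\top\mf e;\zeta,q)$. The three summands of $\phi(y;\zeta,q)=|y|^{1-q}\mathrm{sign}(y)+|y|^{1+q}\mathrm{sign}(y)+\zeta\,\mathrm{sign}(y)$ contribute separately to $\dot V$: the sublinear power yields a term $\lesssim -c_1 V^{1-q/2}$ providing finite-time ingress near the consensus manifold, the superlinear power yields $\lesssim -c_2 V^{1+q/2}$ giving an initial-condition-independent decay far from it, and the sliding-mode term provides a rate proportional to $\kappa_z\zeta\sqrt{\lambda_{\mathcal G}}\|\tilde{\mf e}\|$ that absorbs the disturbance projection onto the disagreement subspace precisely when $\zeta\ge L'/(\kappa_z\sqrt{\lambda_{\mathcal G}})$.

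With disturbances absorbed, the inequality $\dot V\le -c_1 V^{1-q/2}-c_2 V^{1+q/2}$ integrates to a fixed-time bound. Substituting $w=V^{q/2}$ yields $T\le (2/q)\int_0^\infty dw/(c_1+c_2 w^2)=\pi/(q\sqrt{c_1 c_2})$, which is independent of $V(0)$. Matching $c_1,c_2$, which scale like $\kappa_z\lambda_{\mathcal G}/\ell$ after translating from edge-wise to node-wise norms, against the hypothesis $\kappa_z\ge\ell\pi/(q\lambda_{\mathcal G}T_c)$, yields $T\le T_c$. Combined with the average-preservation step, this gives $e_i(t)\equiv(1/\mfs{N})\sum_j e_j(0)$ for every $t\ge T_c$.

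The principal obstacle is the bookkeeping needed to propagate the scalar homogeneity properties of $\phi$ through the edgewise action of $\mf B$ into the final $\pi/q$ factor of the settling-time bound: relating $\|\mf B^\top\tilde{\mf e}\|$ to $\|\tilde{\mf e}\|$ through $\lambda_{\mathcal G}$, invoking power-mean inequalities for the $\ell_{1\pm q}$ norms, and selecting a Filippov representative that is consistent with sliding-mode cancellation on the consensus manifold. Because the statement is explicitly a special case of \cite[Theorem 6]{Aldana2019}, the cleanest presentation is to verify that the hypotheses of that theorem map onto the present notation and constants; the Lyapunov and fixed-time integration sketched above is what underlies that reference.
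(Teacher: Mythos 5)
The paper provides no proof of this proposition at all---it is imported verbatim as a special case of \cite[Theorem 6]{Aldana2019}---and your proposal ultimately defers to that same citation, so the two approaches coincide; your Lyapunov sketch (disagreement coordinates via the incidence matrix, the $-c_1V^{1-q/2}-c_2V^{1+q/2}$ differential inequality integrating to the initial-condition-independent bound $\pi/(q\sqrt{c_1c_2})$, and the $\zeta\,\mathrm{sign}(\cdot)$ term dominating the disturbance through $\sqrt{\lambda_{\mathcal{G}}}$) is a faithful account of what underlies that reference. Your side observation is also correct and worth retaining: the stated conclusion that every $e_i(t)$ equals the \emph{initial} average additionally requires the average to be invariant, i.e.\ $\sum_i d_i(t)\equiv 0$, a hypothesis the proposition omits but which holds in its only use here (in Lemma~\ref{le:consensus}, $d_i=\dot{\bar z}-\dot z_i$ sums to zero over $i$).
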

\begin{lemma}
\label{le:consensus}
Let $\mathcal{G}$ be a connected undirected graph with $\mfs{N}$ nodes, $\ell$ edges, algebraic connectivity $\lambda_{\mathcal{G}}$ and $\sum_{i=1}^\mfs{N}\upupsilon_i(0)=0$. Moreover, let $\bar{z}(t) = (1/\mfs{N})\sum_{i=1}^\mfs{N}z_i(t)$ and $|\dot{z}_i(t)|\leq L, \forall t\geq 0$. Then, \eqref{eq:consensus} satisfies that $\hat{z}_i(t)\equiv \bar{z}(t), \forall t\geq T_c$ and $\forall i,$ provided that $\kappa_z\geq \ell\pi/(q\lambda_\mathcal{G}T_c), \zeta\geq 2L/(\kappa_z\sqrt{\lambda_\mathcal{G}})$.
\end{lemma}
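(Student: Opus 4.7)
The plan is to reduce Lemma~\ref{le:consensus} to Proposition~\ref{prop:robustConsensus} by casting the consensus error around the true average $\bar{z}(t)$ as the state of a robust consensus ODE with a bounded disturbance.

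First, I would define the error $e_i(t) := \hat{z}_i(t) - \bar{z}(t)$, so that the claim $\hat{z}_i(t) \equiv \bar{z}(t)$ for $t \geq T_c$ becomes $e_i(t) \equiv 0$. Differentiating and using $\hat{z}_i = z_i - \upupsilon_i$ together with the dynamics of $\upupsilon_i$ in \eqref{eq:consensus},
\[
\dot{e}_i(t) = \dot{z}_i(t) - \dot{\bar{z}}(t) - \kappa_z \sum_{j\in\mathcal{N}_i}\phi\bigl(\hat{z}_j(t)-\hat{z}_i(t);\zeta,q\bigr).
\]
Since $\bar{z}(t)$ is common to all agents, $\hat{z}_j - \hat{z}_i = e_j - e_i$, and because $\phi(\,\cdot\,;\zeta,q)$ is odd, the ODE rewrites exactly in the form of \eqref{eq:robustCons} with disturbance $d_i(t) := \dot{z}_i(t) - \dot{\bar{z}}(t)$.

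Second, I would bound $|d_i(t)|$. Using $|\dot{z}_i(t)| \leq L$ by hypothesis and $|\dot{\bar{z}}(t)| \leq (1/\mfs{N})\sum_i|\dot{z}_i(t)| \leq L$, the triangle inequality gives $|d_i(t)| \leq 2L$. Thus Proposition~\ref{prop:robustConsensus} applies with $L' = 2L$, and the hypotheses $\kappa_z \geq \ell\pi/(q\lambda_{\mathcal G}T_c)$ and $\zeta \geq 2L/(\kappa_z\sqrt{\lambda_{\mathcal G}})$ in Lemma~\ref{le:consensus} exactly match the required bounds. The proposition then yields, for $t \geq T_c$,
\[
e_i(t) \equiv \frac{1}{\mfs{N}}\sum_{k=1}^{\mfs{N}} e_k(0).
\]

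Third, I would evaluate this average using the initialization hypothesis. By definition,
\[
\frac{1}{\mfs{N}}\sum_{k=1}^{\mfs{N}} e_k(0) = \frac{1}{\mfs{N}}\sum_{k=1}^{\mfs{N}} z_k(0) - \frac{1}{\mfs{N}}\sum_{k=1}^{\mfs{N}} \upupsilon_k(0) - \bar{z}(0) = \bar{z}(0) - 0 - \bar{z}(0) = 0,
\]
so $e_i(t)\equiv 0$ and hence $\hat{z}_i(t)\equiv\bar{z}(t)$ for all $t\geq T_c$, as claimed.

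The only subtlety I expect is bookkeeping: verifying that the oddness of $\phi$ correctly aligns the sign between the $\upupsilon_i$-dynamics in \eqref{eq:consensus} and the standard form \eqref{eq:robustCons}, and recognizing that the factor of $2$ in the threshold $\zeta \geq 2L/(\kappa_z\sqrt{\lambda_{\mathcal G}})$ (as opposed to $L/(\kappa_z\sqrt{\lambda_{\mathcal G}})$ in Proposition~\ref{prop:robustConsensus}) arises precisely from bounding $\dot{z}_i - \dot{\bar{z}}$ rather than $\dot{z}_i$ alone. Everything else is a direct reduction; no new Lyapunov argument is needed since finite-time convergence is inherited from Proposition~\ref{prop:robustConsensus}.
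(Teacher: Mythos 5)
Your proof is correct and follows essentially the same route as the paper: both reduce to Proposition~\ref{prop:robustConsensus} via the error $e_i = \hat{z}_i - \bar{z}$ (the paper uses the sign-flipped $e_i = \upupsilon_i - (z_i - \bar{z})$, which is equivalent), bound the disturbance $\dot z_i - \dot{\bar z}$ by $2L$, and use $\sum_i \upupsilon_i(0)=0$ to show the residual average vanishes. The only cosmetic difference is that your appeal to oddness of $\phi$ is unnecessary, since $\hat z_j - \hat z_i = e_j - e_i$ holds directly.
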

\begin{proof}
Let $e_i(t) = \upupsilon_i(t)-(z_i(t)-\bar{z}(t))$ such that:
$\dot{e}_i(t) =
(\dot{\bar{z}}(t) - \dot{z}_i(t))-\kappa_z\sum_{\RevTwo{j}\in\mathcal{N}_i} 
\phi\left(e_j(t)-e_i(t) ;\zeta,q \right)$
equivalent to \eqref{eq:robustCons} with $d_i(t) = \dot{\bar{z}}(t) - \dot{z}_i(t)$. By assumption, $|d_i(t)|\leq 2L$. Thus, Proposition \ref{prop:robustConsensus} is used with $L'=2L$ to conclude that $e_i(t)\equiv(1/\mfs{N})\sum_{i=1}^\mfs{N}e_i(0) = 0, \forall t\geq T_c$. Hence, $\upupsilon_i(t) = z_i(t) - \bar{z}(t)$ and $\hat{z}_i(t) = \bar{z}(t)$ for $t\geq T_c$.
\end{proof}
Now, we study the ideal PGF trajectories for $\mf{x}(t)$, dictated by \eqref{eq:ideal}. \RevOne{The following result shows that $\mf{x}(t)$ converge to the feasible region $\mathcal{C}$, a requirement for a solution of \eqref{eq:relax}.}
\begin{lemma}
\label{le:sliding}
Let \eqref{eq:ideal} and $\underline{b}>0$. Then, $\forall \mf{x}(T_0)\notin\mathcal{C}$, $x_i(T_0)\geq \underline{b}$ there exist $T_\varepsilon>0$ such that $\mf{x}(t)\in\mathcal{C}, \forall t\geq T_0+T_\varepsilon$.
\end{lemma}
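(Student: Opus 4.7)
The plan is to reduce the vector dynamics in \eqref{eq:ideal} to a scalar ODE for $s(\mf{x}(t))=\|\mf{x}(t)\|^2/\mfs{N}$, show that $s$ is driven into $[1-\varepsilon,1]$ in finite time, and then argue via a Filippov / sliding mode analysis that $\mathcal{C}$ is subsequently forward invariant.

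First I would differentiate $s$ along the outside branch of \eqref{eq:ideal}. Using $\dot{\mf{x}}(t)=\kappa_\mathcal{C}\mf{x}(t)\,\text{\normalfont sign}((1-\varepsilon/2)-s(\mf{x}(t)))$, one obtains
\[
\dot{s}(t)=\frac{2}{\mfs{N}}\mf{x}(t)^\top\dot{\mf{x}}(t)=2\kappa_\mathcal{C}\,s(t)\,\text{\normalfont sign}\!\big((1-\varepsilon/2)-s(t)\big),
\]
a scalar dynamics that drives $s$ exponentially toward the interior target $1-\varepsilon/2\in(1-\varepsilon,1)$. I would then integrate the two cases. If $s(T_0)<1-\varepsilon$, then $s(t)=s(T_0)e^{2\kappa_\mathcal{C}(t-T_0)}$ until $s$ hits $1-\varepsilon$, giving a reach time bounded by $\ln((1-\varepsilon)/s(T_0))/(2\kappa_\mathcal{C})$; this is finite because the hypothesis $x_i(T_0)\geq\underline{b}>0$ forces $s(T_0)\geq\underline{b}^2>0$. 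Symmetrically, if $s(T_0)>1$ then $s(t)=s(T_0)e^{-2\kappa_\mathcal{C}(t-T_0)}$, with reach time bounded by $\ln(s(T_0))/(2\kappa_\mathcal{C})<\infty$. Taking the maximum of these two bounds yields the candidate $T_\varepsilon$ guaranteeing $\mf{x}(T_0+T_\varepsilon)\in\mathcal{C}$.

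The hardest part will be showing that $\mf{x}(t)$ \emph{remains} in $\mathcal{C}$ for all $t\geq T_0+T_\varepsilon$, since the inside branch $\dot{\mf{x}}=-\kappa_\mathcal{C}\mfs{N}\nabla f(\mf{Q}(\mf{x}))$ does not a priori respect the constraint $s\in[1-\varepsilon,1]$. I would handle this with a Filippov / sliding mode analysis at each boundary of $\mathcal{C}$: on $\{s=1\}$ the outside branch gives $\dot{s}=-2\kappa_\mathcal{C}<0$, and on $\{s=1-\varepsilon\}$ it gives $\dot{s}=2\kappa_\mathcal{C}(1-\varepsilon)>0$, so the outside field is strictly inward at both boundaries. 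If the inside field is tangent to, or also points into, $\mathcal{C}$ at a boundary point, the trajectory simply stays inside; otherwise the two opposing components of the Filippov set-valued map convex-combine into a vector tangent to the boundary sphere, producing an attracting sliding mode there. Either way $s(t)$ cannot leave $[1-\varepsilon,1]$ after entry, so $\mf{x}(t)\in\mathcal{C}$ for all $t\geq T_0+T_\varepsilon$, as required.
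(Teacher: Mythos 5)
Your proposal is correct and follows essentially the same route as the paper: both analyze the scalar quantity $s(\mf{x}(t))$ along the outside branch of \eqref{eq:ideal}, split into the cases $s<1-\varepsilon$ and $s>1$, establish a finite reaching time using the lower bound $s(T_0)\geq\underline{b}^2$, and conclude forward invariance of $\mathcal{C}$ from the fact that the switching field points inward at both boundaries. The only difference is cosmetic: you integrate the scalar ODE exactly to get logarithmic reach times, whereas the paper bounds $\dot{V}_1,\dot{V}_2$ by a negative constant to get linear-in-time bounds.
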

\begin{proof} 
We split the proof in two cases: (a) $(1-\varepsilon/2)-s(\mf{x}(T_0))>0$: in this case, note that $\dot{x}_i(t) = \kappa_{\mathcal{C}}x_i(t)$ for $t\in[T_0,T_0+T')$ with $T' = \inf\{t>T_0 : \mf{x}(t)\in\mathcal{C}\}$. Thus, $x_i(t)$ is increasing in such interval and $\|\mf{x}(t)\|\geq \sqrt{\mfs{N}\min_{i\in\mathcal{I}}x_i(t)^2}\geq \sqrt{\mfs{N}}\underline{b}, \quad \forall t\in[T_0,T_0+T').$ Now, consider a Lyapunov function candidate $V_1(\mf{x}(t)) = (1-\varepsilon)-s(\mf{x}(t))$, whose time derivative is $\dot{V}_1(\mf{x}(t)) \kern -0.1cm=\kern -0.1cm -(2/\mfs{N})\mf{x}(t)^\top\dot{\mf{x}}(t) \kern -0.1cm= \kern -0.1cm-(2/\mfs{N})\kappa_\mathcal{C}\|\mf{x}(t)\|^2\kern -0.1cm\leq\kern -0.1cm-2\kappa_{\mathcal{C}}\underline{b}^2.$ Thus, integrating from $T_0$ to $t$, $V_1(\mf{x}(t))\leq V_1(\mf{x}(T_0))-2\kappa_{\mathcal{C}}\underline{b}^2(t-T_0)$ which satisfies $V_1(\mf{x}(T_0+T_\varepsilon))=0$ with $T'\leq T_\varepsilon := V_1(\mf{x}(T_0))/(2\kappa_{\mathcal{C}}\underline{b}^2)$. Hence, $s(\mf{x}(T_0+T_\varepsilon))=1-\varepsilon$ and thus $\mf{x}(T_0+T_\varepsilon)\in\mathcal{C}$.  (b) $(1-\varepsilon/2)-s(\mf{x}(T_0))<0$: note that $\dot{x}_i(t)=-\kappa_\mathcal{C}x_i(t)$ implies $x_i(t)$ is decreasing for $t\in[T_0,T_0+T_\varepsilon)$ and thus, $\|\mf{x}(t)\|=\sqrt{\mfs{N}{s}(\mf{x}(t))}\geq \sqrt{\mfs{N}}$ since $s(\mf{x}(t))\geq 1$. The rest of the argument for this case follows as before using $V_2(\mf{x}) = s(\mf{x})-1$. Thus, $\mf{x}(t)$ reaches $\mathcal{C}$ in finite time. In addition, $\mf{x}(t)$ remain inside $\mathcal{C}$ for $t\geq T_0+T_\varepsilon$ since $V_1(\mf{x}(t))$ (resp. $V_2(\mf{x}(t))$) cannot increase at the inner (resp. outer) boundary of $\mathcal{C}$.
\end{proof}
\RevOne{Now, we recall the definition of  the projected gradient with respect to a manifold and then apply the definition to the problem of interest in \eqref{eq:relax}.}
\begin{definition}\cite{projGrad}
\label{def:proj}
Let $F:\mathbb{R}^n\to\mathbb{R}$ be differentiable and $\mathcal{C}=\{\mf{x}\in\mathbb{R}^n:\gamma_i(\mf{x})\leq 0, i\in\{1,\dots,m\}\}$ for $m$ differentiable constraint functions $\gamma_i:\mathbb{R}^n\to\mathbb{R}$. Then, the projected gradient of $F(\mf{x})$ with respect to $\mathcal{C}$ is given by:
\begin{equation}
\label{eq:projDef}
\begin{aligned}
\proj_\mathcal{C}(\mf{x}, -\nabla \text{$F$}(\mf{x})) &= \argmin_{\mf{w}\in \mathbb{R}^n}\|\mf{w}-(-\nabla F(\mf{x}))\|^2 \\
\text{\normalfont such that} &\ \  \mf{w}^\top\nabla \gamma_i(\mf{x})\leq 0, \forall i\in{\mathcal{J}(\mf{x})}
\end{aligned}
\end{equation}
where $\mathcal{J}(\mf{x})=\{i\in\{1,\dots,m\}:\gamma_i(\mf{x})=0\}$.
\end{definition}

\begin{lemma}
\label{prop:proj}
Let $\mathcal{C}$ with $\gamma_1(\mf{x}) = \|\mf{x}\|^2/\mfs{N} - 1$,  $\gamma_2(\mf{x}) = (1-\varepsilon) - \|\mf{x}\|^2/\mfs{N}$ in Definition \ref{def:proj} with $m=2$. Then, for any differentiable $F(\bullet)$ and any $\mf{x}\in\mathcal{C}$, 
\begin{gather}
\label{eq:projGrad}
\begin{aligned}
&\proj_\mathcal{C}(\mf{x}, -\nabla \text{$F$} (\mf{x})) = \\
&\begin{cases}
-\nabla F(\mf{x}) + \left(\frac{\mf{x}^\top\nabla F(\mf{x})}{\mfs{N}^2}\right)\mf{x} & \gamma_1(\mf{x})=0, \mf{x}^\top\nabla F(\mf{x})\leq 0 \\
-\nabla F(\mf{x}) - \left(\frac{\mf{x}^\top\nabla F(\mf{x})}{(\mfs{N}(1-\varepsilon))^2}\right)\mf{x} & \gamma_2(\mf{x})=0, \mf{x}^\top\nabla F(\mf{x})\geq 0 \\
\mathrlap{-\nabla F(\mf{x})} & \text{\normalfont otherwise}
\\
\end{cases}
\end{aligned}
\raisetag{15pt}
\end{gather}
\end{lemma}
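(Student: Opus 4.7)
The plan is to prove \eqref{eq:projGrad} directly from Definition~\ref{def:proj} by KKT analysis of the convex quadratic program that defines $\proj_\mathcal{C}(\mf{x},-\nabla F(\mf{x}))$. As a preliminary step, I would compute $\nabla\gamma_1(\mf{x}) = 2\mf{x}/\mfs{N}$ and $\nabla\gamma_2(\mf{x}) = -2\mf{x}/\mfs{N}$, both parallel to $\mf{x}$, and observe that because $\varepsilon\in(0,1)$ the zero-sets of $\gamma_1$ and $\gamma_2$ are disjoint concentric spheres, so $\mathcal{J}(\mf{x})$ contains at most one index. Hence the feasible region of the inner program in \eqref{eq:projDef} is either all of $\mathbb{R}^\mfs{N}$ or a single half-space.

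The bulk of the proof is a case split on $\mathcal{J}(\mf{x})$. If $\mathcal{J}(\mf{x})=\emptyset$, the program is unconstrained and the minimizer is $-\nabla F(\mf{x})$, matching the ``otherwise'' branch. If $\mathcal{J}(\mf{x})=\{1\}$, the constraint reads $\mf{w}^\top\mf{x}\leq 0$; it fails to bind precisely when $-\nabla F(\mf{x})^\top\mf{x}\leq 0$, i.e.\ $\mf{x}^\top\nabla F(\mf{x})\geq 0$, which again yields the ``otherwise'' answer. When the constraint binds ($\mf{x}^\top\nabla F(\mf{x})\leq 0$), I would set up the Lagrangian
\begin{equation*}
L(\mf{w},\lambda) = \tfrac{1}{2}\|\mf{w}+\nabla F(\mf{x})\|^2 + \lambda\,\mf{w}^\top\nabla\gamma_1(\mf{x}),
\end{equation*}
solve $\partial_\mf{w}L=0$ to get $\mf{w}^\star = -\nabla F(\mf{x}) - \lambda\nabla\gamma_1(\mf{x})$, and impose $\mf{w}^{\star\top}\nabla\gamma_1(\mf{x})=0$ to determine $\lambda$. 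Substituting back, together with $\|\mf{x}\|^2=\mfs{N}$ on the active boundary $\gamma_1(\mf{x})=0$, yields the stated rank-one correction along $\mf{x}$. The case $\mathcal{J}(\mf{x})=\{2\}$ is analogous, using $\|\mf{x}\|^2=\mfs{N}(1-\varepsilon)$ on $\gamma_2(\mf{x})=0$, with the sign of the correction flipped because $\nabla\gamma_2 = -\nabla\gamma_1$.

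The main thing to be careful about is the sign bookkeeping: one must verify that the Lagrange multiplier $\lambda$ comes out non-negative in each binding case (so that KKT actually certifies optimality for the convex QP) and that the boundary conditions $\gamma_i(\mf{x})=0$ are applied at the right place when simplifying the scalar coefficient of $\mf{x}$. Once the sign conditions in each case are matched with the complementary slackness conditions of the KKT system, the three branches of \eqref{eq:projGrad} follow from routine algebra, and no convexity or regularity issue arises because the constraints are affine once restricted to the active sphere.
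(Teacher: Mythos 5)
Your approach is essentially the paper's: both reduce \eqref{eq:projDef} to a single--half-space quadratic program via the active set $\mathcal{J}(\mf{x})$ (using $\nabla\gamma_1=2\mf{x}/\mfs{N}$, $\nabla\gamma_2=-2\mf{x}/\mfs{N}$ and the disjointness of the two spheres), dispatch the empty and non-binding cases immediately, and resolve the binding case by a Lagrange/KKT argument with the multiplier fixed by $\mf{w}^{\star\top}\nabla\gamma_i(\mf{x})=0$ and its sign checked against $\mf{x}^\top\nabla F(\mf{x})$. One caution about the ``routine algebra'' at the end: carrying out your own computation gives the correction coefficient $\mf{x}^\top\nabla F(\mf{x})/\|\mf{x}\|^2$ with $\|\mf{x}\|^2=\mfs{N}$ (resp.\ $\mfs{N}(1-\varepsilon)$) on the active boundary, and a $+$ sign in the $\gamma_2$ branch as well, whereas the displayed formula \eqref{eq:projGrad} has $\mfs{N}^2$, $(\mfs{N}(1-\varepsilon))^2$ and a $-$ sign, for which $\mf{w}^{\star\top}\mf{x}=0$ fails; so your derivation will produce a (corrected) variant of the statement rather than the statement verbatim, an inconsistency that is already present in the paper's own stationarity check.
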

\begin{proof}
First, $\nabla \gamma_1(\mf{x}) = 2\mf{x}/\mfs{N}, \nabla \gamma_2(\mf{x}) = -2\mf{x}/\mfs{N}$. Consider the following cases: (a) $\gamma_1(\mf{x})<0, \gamma_2(\mf{x})<0$: since \eqref{eq:projDef} is unconstrained, $\mf{w}=-\nabla F(\mf{x})$. (b) for $\gamma_1(\mf{x})=0, \mf{x}^\top\nabla F(\mf{x})\geq 0$ or $\gamma_2(\mf{x})=0, \mf{x}^\top\nabla F(\mf{x})\leq 0$: $\mf{w}=-\nabla F(\mf{x})$ satisfies the constraint $\mf{w}^\top\nabla\gamma_1(\mf{x})\leq 0$ and $\mf{w}^\top\nabla\gamma_2(\mf{x})\leq 0$ in \eqref{eq:projDef} respectively. (c) $\gamma_1(\mf{x})=0, \mf{x}^\top\nabla F(\mf{x})\leq 0$: let $h(\mf{w})=\|\mf{w}+\nabla F(\mf{x})\|^2$ and  $\mf{w}^* = -\nabla F(\mf{x}) + \left(\frac{\mf{x}^\top\nabla F(\mf{x})}{\mfs{N}^2}\right)\mf{x}$. Note that $(\mf{w}^*)^\top \nabla\gamma_1(\mf{x})=0$ making $\mf{w}^*$ is feasible for \eqref{eq:projDef} and $\nabla_{\mf{w}}h(\mf{w}) = 2(\nabla F(\mf{x})+\mf{w}) = \lambda\mf{x}/\mfs{N}=(\lambda/2) \nabla_\mf{w}(\mf{w}^\top \nabla\gamma_1(\mf{x}))$ is satisfied with $\mf{w}=\mf{w}^*$ and Lagrange multiplier $\lambda = 2\mf{x}^\top \nabla F(\mf{x})/\mfs{N}$ where $\nabla_{\mf{w}}$ denotes gradient with respect to $\mf{w}$. Hence, $\mf{w}=\mf{w}^*$ satisfies the optimality conditions for \eqref{eq:projDef} and is the unique minimizer \cite[Proposition 3.1.1]{nonlinear_prog}. (d) $\gamma_2(\mf{x})=0, \mf{x}^\top\nabla F(\mf{x})\geq 0 $: the same reasoning of (c) applies.
\end{proof}
\RevOne{In the following, we show that the ideal system \eqref{eq:ideal} flows in the direction of the previously obtained projected gradient.}
\begin{lemma}
\label{le:gradient}
Let \eqref{eq:ideal} with $\mf{x}(T_0)\in\mathcal{C}$. Then, under Definition \ref{def:proj}, $\dot{\mf{x}}(t) = \alpha(\mf{x}(t))\proj_\mathcal{C}(\mf{x}(t), -\nabla \text{$f$} (\mf{Q}(\mf{x}(t)))$ for some positive scalar function $\alpha:\mathbb{R}^n\to\mathbb{R}_{>0}$.
\end{lemma}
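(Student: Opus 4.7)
My plan is to verify the identity branch by branch, matching the three cases of Lemma \ref{prop:proj} with the corresponding behaviors of the discontinuous dynamics \eqref{eq:ideal}. Since $\mf{x}(T_0)\in\mathcal{C}$ and Lemma \ref{le:sliding} guarantees forward invariance of $\mathcal{C}$, it suffices to analyze \eqref{eq:ideal} on the interior of $\mathcal{C}$ and on its two boundary pieces $\{\gamma_1=0\}$ and $\{\gamma_2=0\}$, with Filippov's convex regularization handling the switching surfaces.

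On the interior of $\mathcal{C}$, \eqref{eq:ideal} gives $\dot{\mf{x}} = -\kappa_\mathcal{C}\mfs{N}\nabla f(\mf{Q}(\mf{x}))$, which matches the ``otherwise'' branch of \eqref{eq:projGrad} with $\alpha(\mf{x}) = \kappa_\mathcal{C}\mfs{N}$. On the boundary piece $\gamma_1(\mf{x})=0$ with $\mf{x}^\top\nabla f(\mf{Q}(\mf{x}))\geq 0$ (and symmetrically on $\gamma_2(\mf{x})=0$ with the opposite sign), the same ``inside'' branch already produces a velocity satisfying $\tfrac{d}{dt}s(\mf{x})=(2/\mfs{N})\mf{x}^\top\dot{\mf{x}}\leq 0$, so no switching is triggered and the identity again holds with $\alpha(\mf{x})=\kappa_\mathcal{C}\mfs{N}$.

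The nontrivial case is the sliding regime on $\gamma_1(\mf{x})=0$ with $a:=\mf{x}^\top\nabla f(\mf{Q}(\mf{x}))<0$, where the ``inside'' velocity $-\kappa_\mathcal{C}\mfs{N}\nabla f$ points outward while the limit of the ``outside'' velocity $-\kappa_\mathcal{C}\mf{x}$ points inward. By Filippov's prescription, the solution satisfies a convex combination $\dot{\mf{x}} = -\beta\kappa_\mathcal{C}\mf{x}-(1-\beta)\kappa_\mathcal{C}\mfs{N}\nabla f$ with $\beta\in[0,1]$ fixed by the tangency condition $\mf{x}^\top\dot{\mf{x}}=0$. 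Using $\|\mf{x}\|^2=\mfs{N}$ on the active boundary, the tangency equation yields a unique $\beta\in(0,1)$; substituting back and simplifying, $\dot{\mf{x}}$ equals a strictly positive scalar multiple of $-\nabla f + c\mf{x}$ with the constant $c$ aligning with the coefficient in Lemma \ref{prop:proj}(c). The symmetric argument on the inner boundary $\gamma_2=0$ with $\mf{x}^\top\nabla f>0$ yields case (d) of Lemma \ref{prop:proj}.

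The main obstacle is the Filippov analysis of the sliding case: one must confirm that sliding mode actually occurs under the sign hypothesis on $a$, solve the tangency equation for $\beta$ and show $\beta\in(0,1)$ rather than a degenerate endpoint, and verify that the resulting effective $\alpha(\mf{x})$ is strictly positive. The algebraic matching of the sliding velocity with \eqref{eq:projGrad} is routine once $\|\mf{x}\|^2=\mfs{N}$ (or $\mfs{N}(1-\varepsilon)$ on the inner boundary) is exploited, and assembling the three branches gives a piecewise-defined positive $\alpha(\mf{x})$, proving the claim.
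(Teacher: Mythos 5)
Your proof is correct and follows essentially the same route as the paper's: a branch-by-branch match between \eqref{eq:ideal} and the projection formula of Lemma \ref{prop:proj}, with the Filippov convex combination and the tangency condition $\mf{x}^\top\dot{\mf{x}}=0$ handling the sliding regime on the active boundary. The only difference is cosmetic --- you solve explicitly for the Filippov coefficient $\beta$ and check $\beta\in(0,1)$, whereas the paper directly exhibits the convex-combination weights; both yield the same positive multiple of the projected gradient.
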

\begin{proof}
In this proof, we omit time dependence for conciseness. We compare the right hand side of \eqref{eq:ideal} with \eqref{eq:projGrad} in Lemma \ref{prop:proj} under $F(\mf{x})=f(\mf{Q}(\mf{x}))$: (a) $\gamma_1(\mf{x})<0, \gamma_2(\mf{x})<0$: $\dot{\mf{x}} = -\alpha(\mf{x})\nabla F(\mf{x}) = \alpha(\mf{x})\proj_\mathcal{C}(\mf{x}, -\nabla \text{$F$} (\mf{x}))$ due to \eqref{eq:projGrad} with $\alpha(\mf{x}) = \kappa_\mathcal{C}\mfs{N}$.  (b) $\gamma_1(\mf{x})=0$: we invoke the Filippov interpretation of solutions to write $\dot{\mf{x}}$. Call $\mf{w}_1 = \kappa_\mathcal{C}\mf{x}\text{\normalfont sign}((1-\frac{\varepsilon}{2})-{s}(\mf{x}))$ and $\mf{w}_2 = -\kappa_\mathcal{C}\mfs{N}\nabla f(\mf{Q}(\mf{x}))$. Filippov solutions in this manifold have $\dot{\mf{x}}$ in the convex hull of $\{\mf{w}_1,\mf{w}_2\}$. Concretely, if $\gamma_1(\mf{x})=0, \mf{x}^\top\nabla F(\mf{x})\geq 0$, then $\mf{w}_1, \mf{w}_2$ point inside $\mathcal{C}$, and $s(\mf{x}(t))=0$ is satisfied for an isolated instant $t$, hence ignored by the trajectory \cite{cortes2008}. In the case of $\gamma_1(\mf{x})=0, \mf{x}^\top\nabla F(\mf{x})\leq 0$, $\mf{w}_1, \mf{w}_2$ point in different directions with respect to the inner boundary of $\mathcal{C}$, inducing sliding motion along $\gamma_1(\mf{x})=0$. Therefore, $\dot{\mf{x}}$ is the convex combination of $\mf{w}_1,\mf{w}_2$, lying in the tangent plane of $\gamma_1(\mf{x})=0$ \cite{cortes2008}. Note that $\mf{w}=-\nabla F(\mf{x}) + \left(\frac{\mf{x}^\top\nabla F(\mf{x})}{\mfs{N}^2}\right)\mf{x} = \left(\frac{\mf{x}^\top\nabla F(\mf{x})}{\kappa_\mathcal{C}\mfs{N}^2\text{\normalfont sign}((1-\frac{\varepsilon}{2})-{s}(\mf{x}))}\right)\mf{w}_1 + \left(\frac{1}{\kappa_\mathcal{C}\mfs{N}}\right)\mf{w}_2$ with $\mf{w}^\top \mf{x} = 0$, lying in such tangent plane. Hence, $\dot{\mf{x}}$ is proportional to $\mf{w}$ and $\dot{\mf{x}} = \alpha(\mf{x})\mf{w} = \alpha(\mf{x})\proj_\mathcal{C}(\mf{x}, -\nabla \text{$F$} (\mf{x}))$ for some $\alpha(\mf{x})>0$.  (c) $\gamma_2(\mf{x})=0$: the proof follows from (b).
\end{proof}

\begin{lemma}\label{le:bounded}
Consider \eqref{eq:ideal} under Assumption~\ref{ass:boundness} and $f(\bullet)$ in Table \ref{tab:options} labeled by $\mu\in\{0,1,2\}$. For any $\mf{x}(T_0)\notin\mathcal{C}$ it follows that $\|\dot{\mf{x}}(t)\|\leq h(\mfs{N}), \forall t\geq T_0$ with $h(\bullet)$ as in Theorem \ref{th:main}.
\end{lemma}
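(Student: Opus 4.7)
The plan is to exploit the two-branch structure of~\eqref{eq:ideal}. By Lemma~\ref{le:sliding}, $\mf{x}(t)$ reaches $\mathcal{C}$ in finite time $T_\varepsilon$ and never leaves, so it suffices to bound $\|\dot{\mf{x}}(t)\|$ separately in the two regimes $\mf{x}(t)\notin\mathcal{C}$ and $\mf{x}(t)\in\mathcal{C}$, and then combine the two estimates into the maximum defining $h(\mfs{N})$.

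First I would establish the uniform bounds $\|\mf{x}(t)\|\leq \sqrt{\mfs{N}}\,\overline{b}$ and $s(\mf{x}(t))\geq \min\{\underline{b}^2,1-\varepsilon\}$ for all $t\geq T_0$. For $t\in[T_0, T_0+T_\varepsilon)$, the explicit solution $\dot{x}_i=\pm\kappa_\mathcal{C} x_i$ makes both $x_i(t)$ and $\|\mf{x}(t)\|^2$ monotone in $t$: sub-case~(a) of Lemma~\ref{le:sliding} grows $\|\mf{x}\|^2$ from $\|\mf{x}(T_0)\|^2\geq \mfs{N}\underline{b}^2$ up to $\mfs{N}(1-\varepsilon)$ at entry, while sub-case~(b) decays it from $\|\mf{x}(T_0)\|^2\leq \mfs{N}\overline{b}^2$ down to $\mfs{N}$ (the initial bounds $x_i(T_0)\in[\underline{b},\overline{b}]$ are preserved because $\mf{x}(t)$ must have been evolving monotonically under the same branch since $t=0$). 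For $t\geq T_0+T_\varepsilon$, $\mf{x}(t)\in\mathcal{C}$ gives $s(\mf{x}(t))\in[1-\varepsilon,1]$ and hence $\|\mf{x}(t)\|\leq\sqrt{\mfs{N}}$ by definition of $\mathcal{C}$. Stitching the intervals together yields both uniform bounds.

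With those in hand, in the branch $\mf{x}(t)\notin\mathcal{C}$ one has $\|\dot{\mf{x}}(t)\|=\kappa_\mathcal{C}\|\mf{x}(t)\|\leq \kappa_\mathcal{C}\sqrt{\mfs{N}}\,\overline{b}$, which matches the first argument of the $\max$ in $h(\mfs{N})$. In the branch $\mf{x}(t)\in\mathcal{C}$, $\|\dot{\mf{x}}(t)\|=\kappa_\mathcal{C}\mfs{N}\|\nabla f(\mf{Q}(\mf{x}(t)))\|$. Assumption~\ref{ass:boundness} gives $\mf{Q}(\mf{x})\succeq\underline{\sigma} s(\mf{x})\mf{I}\succeq\underline{\sigma}\min\{\underline{b}^2,1-\varepsilon\}\mf{I}$, so $\|\mf{Q}(\mf{x})^{-1}\|\leq(\underline{\sigma}\min\{\underline{b}^2,1-\varepsilon\})^{-1}$. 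I would then bound each entry $g_i$ in Table~\ref{tab:options} componentwise through $|x_i|\leq\sqrt{\mfs{N}}\,\overline{b}$, $\|\mf{P}_i^{-1}\|\leq\overline{\sigma}$, and the $\mu$-fold occurrence of $\|\mf{Q}(\mf{x})^{-1}\|$ coming from Table~\ref{tab:options}; summing over $i$ in quadrature and multiplying by $\kappa_\mathcal{C}$ produces exactly the second argument of the $\max$.

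The main obstacle is the uniform bookkeeping in the first step: one must thread the piecewise-exponential behaviour of $\|\mf{x}(t)\|$ in each sub-case of Lemma~\ref{le:sliding} with the inside-$\mathcal{C}$ bound, while simultaneously preserving the uniform lower bound on $s(\mf{x}(t))$ needed to control $\mf{Q}(\mf{x}(t))^{-1}$ in the second step. The componentwise estimates from Table~\ref{tab:options} are then routine given the spectral bounds of Assumption~\ref{ass:boundness}.
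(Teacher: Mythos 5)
Your proposal is correct and its skeleton matches the paper's proof: a uniform bound on the state, the lower bound $s(\mf{x}(t))\geq\min\{\underline{b}^2,1-\varepsilon\}$ giving $\mf{Q}(\mf{x}(t))^{-1}\preceq(\underline{\sigma}\min\{\underline{b}^2,1-\varepsilon\})^{-1}\mf{I}$, a bound on the gradient entries of Table~\ref{tab:options}, and the $\max$ over the two branches of \eqref{eq:ideal}. Where you differ is the state bound: the paper argues componentwise, by contradiction, that $|x_i(t)|\leq\overline{b}$, while you track $\|\mf{x}(t)\|^2$ through the monotone explicit solutions of the two sub-cases of Lemma~\ref{le:sliding} and then invoke $s(\mf{x})\leq 1$ inside $\mathcal{C}$, obtaining only $\|\mf{x}(t)\|\leq\sqrt{\mfs{N}}\,\overline{b}$. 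Your version is arguably the more defensible one: in sub-case~(a) every component is multiplied by the common factor $\sqrt{(1-\varepsilon)/s(\mf{x}(T_0))}>1$ before entry into $\mathcal{C}$, so an individual $x_i$ can overshoot $\overline{b}$ and only the norm bound is really available. The price, $|x_i|\leq\sqrt{\mfs{N}}\,\overline{b}$ instead of $|x_i|\leq\overline{b}$ in the gradient estimate, is exactly repaid by summing the $g_i$ in quadrature ($\|[g_1,\dots,g_\mfs{N}]^\top\|\leq\sqrt{\mfs{N}}\max_i|g_i|$, versus the paper's cruder $\mfs{N}\max_i|g_i|$), so you still land under the second argument of $h(\mfs{N})$. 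Two minor remarks: the parenthetical about monotone evolution ``since $t=0$'' is unnecessary, since in the theorem $u_i\equiv 0$ on $[0,T_c]$, hence $\mf{x}(T_0)=\mf{x}(0)$ and Assumption~\ref{ass:boundness} applies verbatim; and, like the paper, you silently absorb the dimension factor from $\tr(\mf{P}_i^{-1})\leq n\overline{\sigma}$ into $\overline{\sigma}$.
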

First, we show that $|x_i(t)|\leq \overline{b}, \forall t\geq T_0$ by contradiction. Assume there exists $T_2\geq T_0$ such that $|x_i(T_2)|>\overline{b}$. By continuity of the solution, there must have existed $T_1=\sup\{t\in[T_0,T_2]: |x_i(t)|=\overline{b}\}$ such that $a_i(t) = |x_i(t)|$ is increasing for $t\in[T_1,T_2]$. However, in this interval, $\mf{x}(t)\notin\mathcal{C}$ since $s(\mf{x}(t))=\|\mf{x}(t)\|^2/\mfs{N}\geq \min_{i\in\mathcal{I}}|x_i(t)|^2/\mfs{N}\geq \overline{b}^2>1$. Therefore, due to \eqref{eq:ideal}, $\dot{a}_i(t) = -\kappa_\mathcal{C}a_i(t)$ which means that $a_i(t)$ is decreasing in $t\in[T_1,T_2]$ leading to a contradiction. 

Now, we show 
$
\tr(\mf{Q}(\mf{x}(t))^{-1})\leq \mfs{N}(\underline{\sigma}\min\{\underline{b}^2,1-\varepsilon\})^{-1}, \forall t\geq T_0.
$
From the proof of Lemma \ref{le:sliding} it follows that $s(\mf{x}(t))=\|x_i(t)\|^2/\mfs{N}\geq \underline{b}^2, \forall t\in[T_0,T_0+T_\varepsilon]$ and ${s}(\mf{x}(t))\geq 1-\varepsilon, \forall t\geq T_0+T_\varepsilon$. Henceforth, $\mf{Q}(\mf{x}(t))\succeq (1/\mfs{N})\sum_{i=1}^{\mfs{N}}x_i(t)^2\underline{\sigma}\mf{I}\succeq \underline{\sigma}\min\{\underline{b}^2,1-\varepsilon\}\mf{I}$ for $t\geq T_0$. Then, $\mf{Q}(\mf{x}(t))^{-1}\preceq (\underline{\sigma}\min\{\underline{b}^2,1-\varepsilon\})^{-1}\mf{I}$ from which $\tr(\mf{Q}(\mf{x}(t))^{-1})\leq \mfs{N}(\underline{\sigma}\min\{\underline{b}^2,1-\varepsilon\})^{-1}$ is obtained. From the previous bounds, and the gradient in Table \ref{tab:options}:
$$\begin{aligned} \|\nabla f(\mf{Q}(\mf{x}(t)))\|&\leq \max_{i\in\mathcal{I}}2|x_i|\tr(\mf{P}_i^{-1})\tr(\mf{Q}(\mf{x}(t))^{-1})^{\mu} \\&\leq 2\overline{b}\overline{\sigma}\mfs{N}^{\mu}(\underline{\sigma}\min\{\underline{b}^2,1-\varepsilon\})^{-\mu},\end{aligned}$$ and $$\begin{aligned}\|\dot{\mf{x}}(t)\|&\leq \kappa_\mathcal{C}\max\{\|\mf{x}(t)\|, \mfs{N}\|\nabla f(\mf{Q}(\mf{x}(t)))\|\} \\& \leq \kappa_\mathcal{C}\max\{\sqrt{\mfs{N}}\overline{b}, 2\overline{b}\overline{\sigma}\mfs{N}^{\mu+1}(\underline{\sigma}\min\{\underline{b}^2,1-\varepsilon\})^{-\mu}\}\qed\end{aligned}$$

\RevOne{Recalling from Lemma \ref{le:sliding} that \eqref{eq:ideal} flows in the direction of the projected gradient, the following result characterizes convergence of such flow towards the optimum of \eqref{eq:relax}.}

\begin{proposition}
\label{prop:projFlow}
Let $F(\bullet),\gamma_i(\bullet),\mathcal{C}$ defined as in Definition \ref{def:proj} and $\alpha:\mathbb{R}^n\to\mathbb{R}_{>0}$ be a positive scalar function such that $\dot{\mf{x}}(t)=\alpha(\mf{x}(t))\proj_\mathcal{C}(\mf{x}(t),- \nabla \text{$f$}(\mf{x}(t)))$ has unique forward solution for $t\geq T_0$ and any $\mf{x}(T_0)\in\mathcal{C}$. Then, $\mf{x}(t)$ converges asymptotically towards $\argmin_{\mf{x}\in\mathcal{C}}F(\mf{x})$.
\end{proposition}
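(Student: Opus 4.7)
My plan is to take $F$ itself as a Lyapunov function along the flow, then invoke a LaSalle-type argument to conclude convergence to the set of KKT points of $\min_{\mathcal{C}} F$, which coincides with $\argmin_{\mathcal{C}} F$ under the standard convexity assumption implicit in the statement. The key geometric starting point is that $\mf{w}=\mf{0}$ is always feasible in the projection problem \eqref{eq:projDef}, because $\mf{0}^\top\nabla\gamma_i(\mf{x})=0$ for every $i\in\mathcal{J}(\mf{x})$. Optimality of the projection therefore gives $\|\proj_\mathcal{C}(\mf{x},-\nabla F(\mf{x}))+\nabla F(\mf{x})\|^2 \leq \|\nabla F(\mf{x})\|^2$, which after expanding collapses to the descent inequality
\begin{equation*}
\langle \nabla F(\mf{x}), \proj_\mathcal{C}(\mf{x},-\nabla F(\mf{x}))\rangle \;\leq\; -\tfrac{1}{2}\|\proj_\mathcal{C}(\mf{x},-\nabla F(\mf{x}))\|^2 \;\leq\; 0.
\end{equation*}
Multiplying by $\alpha(\mf{x}(t))>0$ yields $\dot{F}(\mf{x}(t))\leq 0$ along every forward trajectory, with equality exactly when $\proj_\mathcal{C}(\mf{x}(t),-\nabla F(\mf{x}(t)))=\mf{0}$.

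Next I would apply LaSalle's invariance principle to conclude that $\mf{x}(t)$ approaches the largest forward-invariant subset of the stationary set $\mathcal{S}:=\{\mf{x}\in\mathcal{C} : \proj_\mathcal{C}(\mf{x},-\nabla F(\mf{x}))=\mf{0}\}$. The boundedness required for LaSalle comes from $\mathcal{C}$ being compact together with the monotone descent of $F$ just established. To identify $\mathcal{S}$ with the optimizers, I would write the KKT conditions of the inner quadratic program \eqref{eq:projDef} evaluated at $\mf{w}=\mf{0}$, showing that $\proj_\mathcal{C}(\mf{x},-\nabla F(\mf{x}))=\mf{0}$ is equivalent to the existence of multipliers $\mu_i\geq 0$ such that $-\nabla F(\mf{x})=\sum_{i\in\mathcal{J}(\mf{x})}\mu_i\nabla\gamma_i(\mf{x})$, which are precisely the KKT conditions for $\min_{\mathcal{C}} F$. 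Under convexity of $F$ on the convex set $\mathcal{C}$, every such KKT point is a global minimizer, delivering the claim.

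The main obstacle is making LaSalle rigorous: the right-hand side of the projected flow is discontinuous at the boundary of $\mathcal{C}$, so the classical smooth-ODE version of LaSalle does not apply directly. I would handle this by working in the Filippov sense already used in Lemma \ref{le:gradient} and invoking a nonsmooth LaSalle variant such as the one in \cite{cortes2008}, or equivalently by appealing to the convergence results for projected dynamical systems in \cite{projGrad}, whose hypotheses are tailored to dynamics of exactly the form assumed here (including the existence and uniqueness of forward solutions assumed in the statement).
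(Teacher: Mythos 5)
Your argument is correct in its core and is essentially a self-contained reconstruction of what the paper outsources: the paper's entire proof is the single sentence that the result ``follows using the same reasoning in [Theorem 3, projGrad], where a proof was provided for $\alpha(\mf{x})=1$.'' What you do differently is actually exhibit that reasoning: the observation that $\mf{w}=\mf{0}$ is feasible in \eqref{eq:projDef}, hence $\langle \nabla F(\mf{x}),\proj_\mathcal{C}(\mf{x},-\nabla F(\mf{x}))\rangle\leq -\tfrac{1}{2}\|\proj_\mathcal{C}(\mf{x},-\nabla F(\mf{x}))\|^2$, is exactly the descent inequality underlying the cited theorem, and your remark that multiplying by $\alpha(\mf{x}(t))>0$ preserves the sign of $\dot F$ is precisely the ``same reasoning'' the paper gestures at (a positive rescaling is a time reparametrization and does not change limit sets). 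Your LaSalle step, the Filippov/nonsmooth caveat, and the identification of the stationary set with the KKT points of $\min_{\mathcal{C}}F$ via the multipliers of the inner quadratic program are all sound. The trade-off is that your route gives the reader an actual verifiable argument at the cost of length, whereas the paper buys brevity by deferring entirely to \cite{projGrad}.

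One caveat you should not gloss over: your final upgrade from ``KKT point'' to ``global minimizer'' invokes ``the standard convexity assumption implicit in the statement,'' but in this paper's instantiation that assumption is not available as stated. The feasible set of Lemma~\ref{prop:proj} is the annulus $\{1-\varepsilon\leq\|\mf{x}\|^2/\mfs{N}\leq 1\}$, which is not convex, and $F(\mf{x})=f(\mf{Q}(\mf{x}))$ need not be convex in $\mf{x}$ after the substitution $\lambda_i=x_i^2/\mfs{N}$, even though \eqref{eq:original} is convex in $\bm{\lambda}$. So what your LaSalle argument honestly delivers is convergence to the set of KKT points of $\min_{\mf{x}\in\mathcal{C}}F(\mf{x})$; passing to $\argmin_{\mf{x}\in\mathcal{C}}F(\mf{x})$ requires an additional argument (e.g., mapping KKT points of \eqref{eq:relax} back to KKT points of the convex program \eqref{eq:original} through the change of variables). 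This gap is inherited from, not worse than, the paper's own one-line proof, but if you are writing the argument out in full you should either state the convexity hypothesis explicitly or weaken the conclusion to critical points.
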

\begin{proof}
The result follows using the same reasoning in  \cite[Theorem 3]{projGrad}, where a proof was provided for $\alpha(\mf{x})=1$.
\end{proof}


\noindent\textbf{Proof of Theorem \ref{th:main}}: We start by analyzing \eqref{eq:consensus1} for $t\in[0,T_c]$, in which $x_i(t)^2\mf{P}_i^{-1}$ and $x_i(t)^2$ remain constant because $u_i(t)=0$. The result from Lemma \ref{le:consensus} is valid obtaining $\hat{s}_i(t)\equiv s(\mf{x}(t)), \hat{\mf{Q}}_i(t) \equiv \mf{Q}(\mf{x}(t))$ at $t=T_c$. Assume that the condition $\hat{s}_i(t)\equiv s(\mf{x}(t))$ is  maintained for some open interval $\mathcal{T}\subset[T_c,\infty)$. Then, $\mf{x}(t)$ is dictated by \eqref{eq:ideal} for any $t\in\mathcal{T}$ and thus $z_i(t)=x_i(t)^2$ comply $|\dot{z}_i(t)|\leq 2|x_i(t)|\|\dot{\mf{x}}(t)\|\leq 2\overline{b}h(\mfs{N})$ due to Lemma \ref{le:bounded}. Hence, the conditions of Lemma \ref{le:consensus} are fulfilled with $L=2\overline{b}h(\mfs{N})$ due to the choice of $\kappa_s$ in Theorem \ref{th:main} for such interval $\mathcal{T}$. 

Now, we show that $\hat{s}_i(t)\equiv s(\mf{x}(t))$ is maintained $\forall t\in[T_c, \infty)$. We verify this by contradiction. Assume that there exists $T_2=\inf\{t\geq T_c: \|\dot{\mf{x}}(t)\|>2\overline{b}h(\mfs{N})\}$. Then, there must have existed some $T_1=\sup\{t<T_2 : \hat{s}_i(t)\neq s(\mf{x}(t)) \text{ or } \hat{\mf{Q}}_i(t)\neq \mf{Q}(\mf{x}(t))\}$ since the system could not have been the ideal one in \eqref{eq:ideal}. Hence, the system is not synchronized for some time $t\in[T_1,T_2]$ for which $\|\mf{x}(t)\|\leq 2\overline{b}h(\mfs{N})$, which is impossible since this means that the inputs for the consensus protocols have bounded derivative, and Lemma \ref{le:consensus} ensure synchronization. Hence, $\hat{s}_i(t)\equiv s(\mf{x}(t)), \forall t\geq T_c$. The same reasoning applies to the synchronization condition $\hat{\mf{Q}}_i(t)\equiv \mf{Q}(\mf{x}(t))$.

Now, use Lemma \ref{le:sliding} with $T_0=T_c$ to obtain $\mf{x}(t)\in\mathcal{C}, \forall t\geq T_c+T_\varepsilon$. We can verify $\mathcal{E}(\mf{Q}(\mf{x}(t))^{-1})\supset \check{\mathcal{E}}, \forall t\geq T_c+T_\varepsilon$ by taking any $\mf{y}\in\check{\mathcal{E}}$ which satisfy $\mf{y}^\top\mf{Q}(\mf{x}(t))\mf{y} = \sum_{i=1}^\mfs{N}(x_i(t)^2/\mfs{N})\mf{y}^\top\mf{P}_i^{-1}\mf{y}\leq \sum_{i=1}^\mfs{N}(x_i(t)^2/\mfs{N}) = s(\mf{x}(t))\leq 1$. Hence, $\mf{y}\in\mathcal{E}(\mf{Q}(\mf{x}(t))^{-1})$. Use Lemma \ref{le:gradient} with $T_0=T_c+T_\varepsilon$ to write \eqref{eq:ideal} as $\dot{\mf{x}}(t) = \alpha(\mf{x}(t))\proj_\mathcal{C}(\mf{x}(t), -\nabla \text{$f$} (\mf{Q}(\mf{x}(t)))$, implying convergence towards the optimal weights due to Proposition \ref{prop:projFlow}. \qed  

\section{Numerical experiments}\label{sec:simulated_experiments}

In this section, we simulate the proposal to illustrate its properties \RevOne{in the context of a sensor fusion problem. We assume that each agent reads a noisy measurement from a sensor with covariance matrix $\{\mf{P}_i\}_{i=1}^\mfs{N}$. For the sake of generality, we set $\mf{P}_i = \mf{M}_i^\top\mf{M}_i$ where $\mf{M}_i$ was drawn with uniformly distributed components. The purpose of each agent $i$ is to compute $\mfs{N}\lambda_i, \mf{P}(\bm{\lambda})$ from $x_i(t)^2, \hat{\mf{Q}}_i(t)^{-1}$ so that they can be used in the sensor fusion rule in \eqref{eq:sf}. We choose $f(\mf{Q}(\mf{x})) = \tr(\mf{Q}(\mf{x})^{-1})$ as a performance index.}

All the differential equations, namely, \eqref{eq:consensus1} and $\dot{x}_i(t)=u_i(t)$ under \eqref{eq:controller}, where simulated using the forward Euler method with time step $\Delta t=10^{-4}$. In the sake of interpretability, we set $n=2$ and a graph $\mathcal{G}$ of $\mfs{N}=6$ agents shown in Figure \ref{fig:graph}. In addition, we set $\varepsilon=0.05$ as well as parameters $T_c=1, q=1/2, \underline{b}=0.1, \overline{b}=1.1, \kappa_\mathcal{C}=0.1, \kappa_s=\kappa_\mf{Q}=10,\zeta_{\mf{Q}}=\zeta_s=1, \zeta_g=4$. 
\begin{figure}
\vspace{0.5em}
\centering
\includegraphics[width=0.3\textwidth]{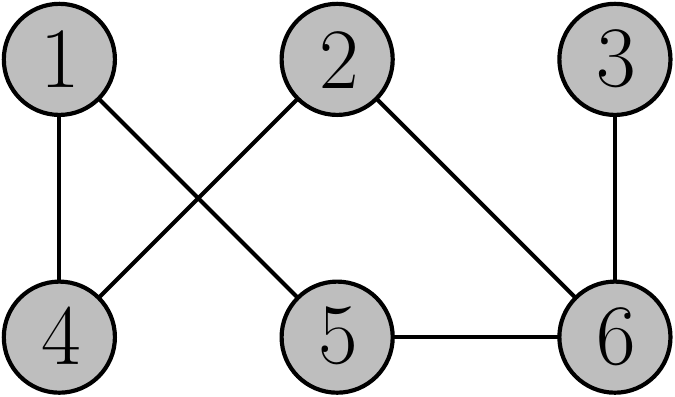}
\caption{Undirected graph $\mathcal{G}$ used in Section \ref{sec:simulated_experiments}.}
\label{fig:graph}
\end{figure}
Figure \ref{fig:surface} shows the local estimates $\hat{s}_i(t)$ which reach consensus towards the global value of $s(\mf{x}(t))$ at $t\approx 0.05$, occurring before $t=T_c=1$, which is a conservative prescribed convergence time bound. After $t=T_c$,  $s(\mf{x}(t))$ increases until it reaches $s(\mf{x}(t))=1-\varepsilon$, which means that $\mf{x}(t)\in\mathcal{C}$ from $t\geq T_c+T_\varepsilon$ with $T_\varepsilon\approx 0.57$. Figure \ref{fig:convergence} shows that, once $\mf{x}(t)\in\mathcal{C}$, then, the trajectories of $\lambda_i(t)=x_i(t)^2/\mfs{N}$ converge towards the global minimizer of \eqref{eq:relax} which corresponds to $\lambda_1^*=\lambda_2^*=\lambda_3^*=\lambda_4^*=0, \lambda_5^*=0.6582, \lambda_6^*=0.3418$ in this example, up to some error less than $\varepsilon=0.05$. Finally, Figure \ref{fig:ellipses} show the ellipses $\{\mathcal{E}(\mf{P}_i)\}_{i=1}^\mfs{N}$ as well as $\mathcal{E}(\mf{Q}(\mf{x}(t))^{-1})$ with $t=T_c+T_\varepsilon$ and $t=30$, which shows that $\mathcal{E}(\mf{Q}(\mf{x}(t))^{-1})\supset \check{\mathcal{E}}$ in both cases, but the latter case leads to a tighter outer L-J ellipse.

\RevTwo{Moreover, we evaluate the time it takes for the agents to reach consensus $T_{\mfs{cons}}$ and $T_\varepsilon$ by repeating the same experiment as before with a circular graph $\mathcal{G}$ of $\mfs{N}=10$ and $20$ agents and with $n=2$ and $4$. We set $\kappa_s,\kappa_\mfs{Q}, \zeta_s,\kappa_\mfs{Q}$ the same in all cases, adjusted accordingly to account for $\ell,\lambda_{\mathcal{G}}$ in both networks as required in Theorem \ref{th:main}. The results are summarized in Table \ref{tab:experiment} where it is observed that $T_{\mfs{cons}}$ increases with $\mfs{N}$ as expected from \cite{Aldana2019}. On the other hand,  $T_{\mfs{cons}}$ is similar as before when $n$ is increased. The reason is that \eqref{eq:consensus1} is executed in parallel for each component of $\hat{\mf{Q}}_i(t)$ such that there is no influence of $n$ in the protocol convergence itself. Note that $T_\varepsilon$ is similar in all cases since once consensus has been reached, the system is equivalent to \eqref{eq:ideal} which does not depend on $\mfs{N},n$ for $\mf{x}(t)\notin\mathcal{C}$.}

\begin{figure}
\centering
\includegraphics[width=0.45\textwidth]{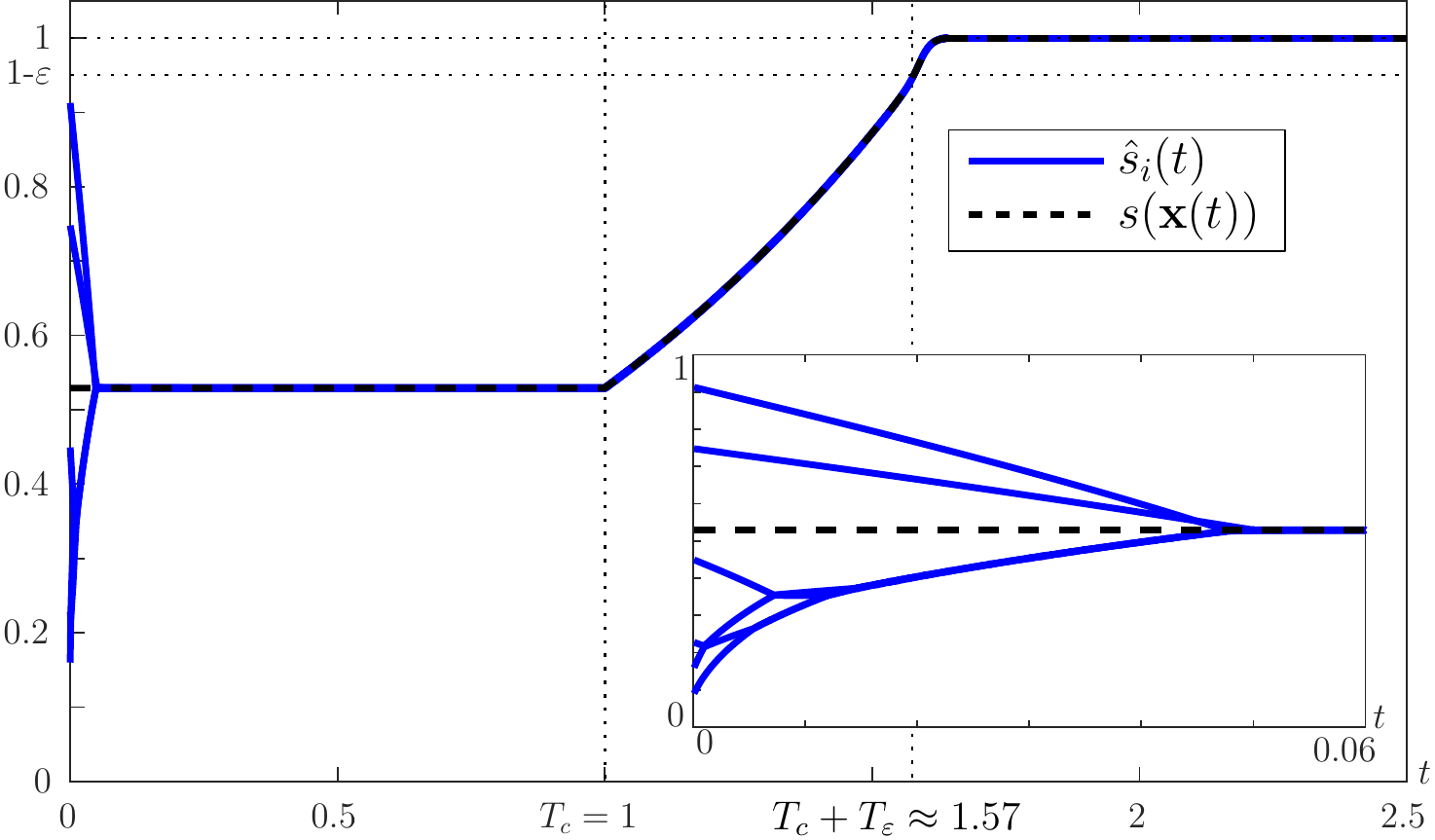}
\caption{Convergence of the estimates $\hat{s}_i(t)$ which reach consensus towards $s(\mf{x}(t))$ at $t\approx 0.05$, with zoom in $t\in[0,0.06]$. Moreover,  the evolution of the surface $s(\mf{x}(t))$ towards achieving $\mf{x}(t)\in\mathcal{C}$ is shown.}
\label{fig:surface}
\end{figure}

\begin{figure}
\centering
\vspace{0.5em}
\includegraphics[width=0.45\textwidth]{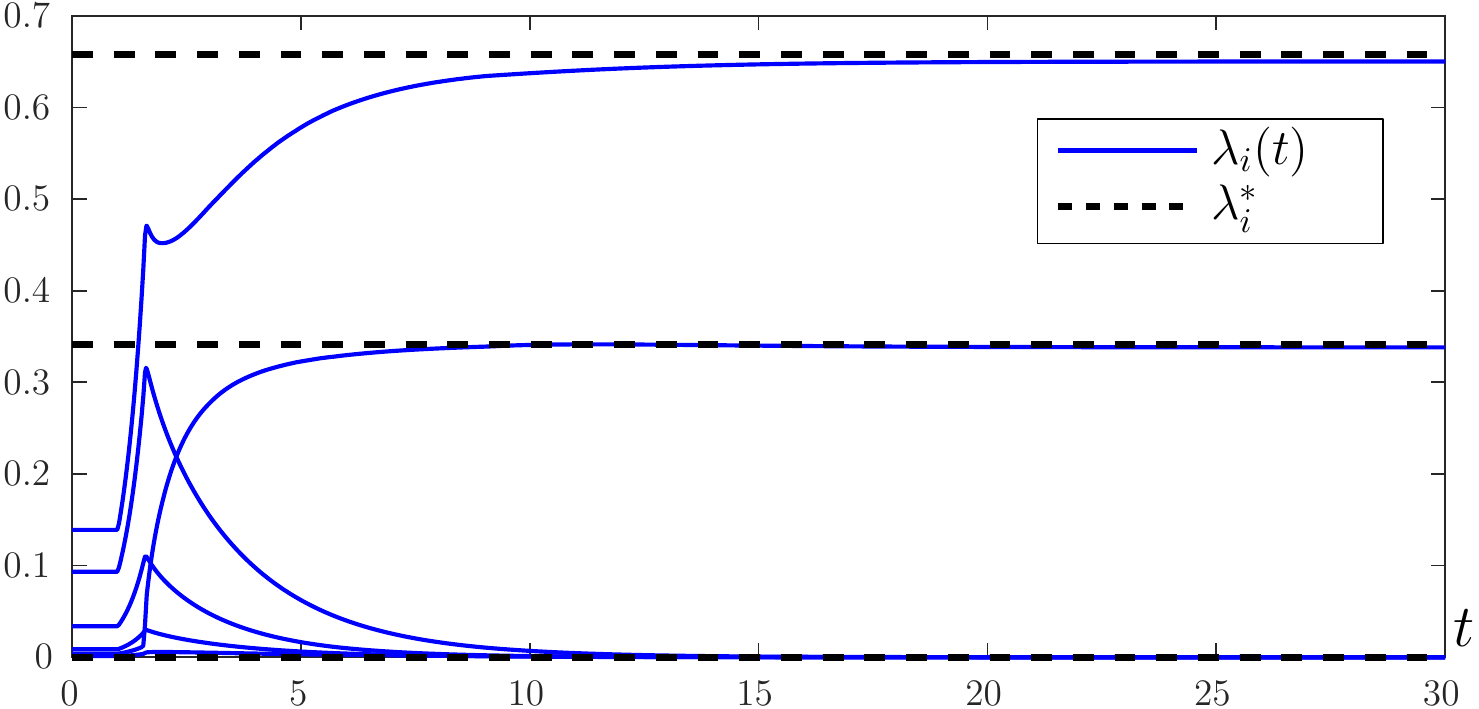}
\caption{Evolution of the trajectories $\lambda_i(t)=x_i(t)^2/\mfs{N}$ which converge towards the global optimal values of \eqref{eq:original} asymptotically, up to error less than $\varepsilon=0.05$.}
\label{fig:convergence}
\end{figure}

\begin{figure}
\centering
\includegraphics[width=0.4\textwidth]{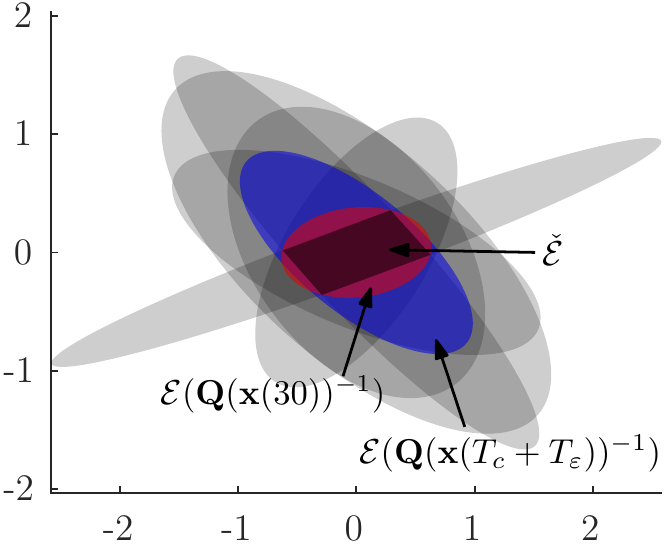}
\caption{Ellipses $\{\mathcal{E}(\mf{P}_i)\}_{i=1}^\mfs{N}$ (shown in grey) as well as $\mathcal{E}(\mf{Q}(\mf{x}(t))^{-1})$ with $t=T_c+T_\varepsilon$ (blue) and $t=30$ (red).}
\label{fig:ellipses}
\end{figure}

\begin{table}[!ht]
\centering
\renewcommand{\arraystretch}{1.5}
\begin{tabular}{llllll}
\hline
\multicolumn{1}{||l|}{$\mfs{N}$}                 & \multicolumn{1}{l|}{$\ell$}                 & \multicolumn{1}{l|}{$\lambda_{\mathcal{G}}$}                & \multicolumn{1}{l|}{$n$} & \multicolumn{1}{l|}{$T_{\mfs{cons}}\ \mfs{[s]}$} & \multicolumn{1}{l||}{$T_\varepsilon\ \mfs{[s]}$} 
\\ 
\hline
\multicolumn{1}{||l|}{\multirow{2}{*}{10}} & \multicolumn{1}{l|}{\multirow{2}{*}{9}} & \multicolumn{1}{l|}{\multirow{2}{*}{0.3819}} & \multicolumn{1}{l|}{2}   & \multicolumn{1}{l|}{0.0371} & \multicolumn{1}{l||}{0.169} \\ \cline{4-6} 
\multicolumn{1}{||l|}{}                  & \multicolumn{1}{l|}{}                  & \multicolumn{1}{l|}{}                  & \multicolumn{1}{l|}{4}   & \multicolumn{1}{l|}{0.0368} & \multicolumn{1}{l||}{0.172} \\ \hline
\multicolumn{1}{||l|}{\multirow{2}{*}{20}} & \multicolumn{1}{l|}{\multirow{2}{*}{19}} & \multicolumn{1}{l|}{\multirow{2}{*}{0.0978}} & \multicolumn{1}{l|}{2}   & \multicolumn{1}{l|}{0.085} & \multicolumn{1}{l||}{0.170} \\ \cline{4-6} 
\multicolumn{1}{||l|}{}                  & \multicolumn{1}{l|}{}                  & \multicolumn{1}{l|}{}                  & \multicolumn{1}{l|}{4}   & \multicolumn{1}{l|}{0.088} & \multicolumn{1}{l||}{0.181} \\ \hline
\end{tabular}
\caption{\RevTwo{Values for the consensus and feasible region reaching times $T_\mfs{cons}, T_\varepsilon$, for a circular graph with $\mfs{N}$ nodes.}}
\label{tab:experiment}
\renewcommand{\arraystretch}{1}
\end{table}

\section{Conclusions}\label{sec:conclusions}

This work has developed a distributed method to compute a class of outer L-J ellipsoids. This is particularly useful for sensor fusion applications, as it allows the computation of the covariance intersection ellipsoid by combining information from the entire network using only local interactions. The algorithm reformulates the centralized problem and utilizes EDC tools to reach consensus and converge to an outer L-J ellipsoid in finite time, and to the global optimum asymptotically. The proposal's advantages are supported by formal convergence analysis and numerical experiments.



\balance

\end{document}